\documentclass[conference]{IEEEtran}

\usepackage{etoolbox}
\usepackage{enumitem}
\usepackage[dvipsnames,svgnames]{xcolor}

\usepackage[noadjust]{cite}

\usepackage{tikz}
\usetikzlibrary{arrows,arrows.meta,shapes.geometric,math}

\usepackage{pgfplots}
\pgfplotsset{compat=1.15}

\usepackage{mathtools}
\usepackage{nccmath}
\usepackage{amssymb}
\usepackage{mathdots}
\usepackage{upgreek}

\usepackage[caption=false]{subfig}
\usepackage{commath}
\usepackage{amsthm}
\usepackage{graphicx}
\usepackage{pgfgantt}
\usepackage{xfrac}
\usepackage{mathrsfs}
\usepackage{bbm}
\usepackage[draft=false]{hyperref}

\usepackage{cleveref}
\crefname{equation}{Eq}{} 
\crefrangelabelformat{equation}{(#3#1#4--#5#2#6)}

\newcommand{\floor}[1]{\left \lfloor #1 \right \rfloor}

\usepackage{autobreak}
\allowdisplaybreaks

\def \D{\mathcal{D}}
\def \E{\mathcal{E}}

\def \N{\mathcal{N}}

\def \S{\mathcal{S}}

\def \fB{\mathbf{B}}
\def \fA{\mathbf{A}}

\def \fc{\mathbf{c}}
\def \fx{\mathbf{x}}

\def \bG {\mathcal{G}_{\fh}}

\def \fh{\mathbf{h}}
\def \fH{\mathbf{H}}

\def \fv{\mathbf{v}}

\def \fx{\mathbf{x}}
\def \fy{\mathbf{y}}
\def \fz{\mathbf{z}}

\def \fY{\mathbf{Y}}
\def \fZ{\mathbf{Z}}
\def \f0{\mathbf{0}}

\definecolor{blau_1a}{RGB}{93,133,195}
\definecolor{blau_2a}{RGB}{0,156,218}
\definecolor{gruen_3a}{RGB}{80,182,149}
\definecolor{gruen_4a}{RGB}{175,204,80}
\definecolor{gruen_5a}{RGB}{221,223,72}
\definecolor{orange_6a}{RGB}{255,224,92}
\definecolor{orange_7a}{RGB}{248,186,60}
\definecolor{rot_8a}{RGB}{238,122,52}
\definecolor{rot_9a}{RGB}{233,80,62}
\definecolor{lila_10a}{RGB}{201,48,142}
\definecolor{lila_11a}{RGB}{128,69,151}

\definecolor{blau_1b}{RGB}{0,90,169}
\definecolor{blau_2b}{RGB}{0,131,204}
\definecolor{gruen_3b}{RGB}{0,157,129}
\definecolor{gruen_4b}{RGB}{153,192,0}
\definecolor{gruen_5b}{RGB}{201,212,0}
\definecolor{orange_6b}{RGB}{253,202,0}
\definecolor{orange_7b}{RGB}{245,163,0}
\definecolor{rot_8b}{RGB}{236,101,0}
\definecolor{rot_9b}{RGB}{230,0,26}
\definecolor{lila_10b}{RGB}{166,0,132}
\definecolor{lila_11b}{RGB}{114,16,133}

\definecolor{mycolor1}{rgb}{0.0, 0.18, 0.39}
\definecolor{mycolor2}{RGB}{87,108,67}
\definecolor{mycolor3}{RGB}{8,133,161}
\definecolor{mycolor4}{RGB}{80,91,161}
\definecolor{mycolor5}{RGB}{98,122,157}
\definecolor{mycolor6}{RGB}{255,163,67}
\definecolor{mycolor7}{RGB}{152,205,225}
\definecolor{mycolor8}{RGB}{242,204,48}
\definecolor{mycolor9}{rgb}{0,.5,0}
\definecolor{mycolor10}{rgb}{.59,.44,.09}
\definecolor{mycolor11}{RGB}{231,199,31} 
\definecolor{mycolor12}{RGB}{8,133,161} 
\definecolor{mycolor13}{RGB}{157,188,64} 
\definecolor{mycolor14}{RGB}{194,150,130} 
\definecolor{mycolor15}{RGB}{98,122,157} 
\definecolor{mycolor16}{RGB}{160,160,160} 
\definecolor{mycolor17}{RGB}{115,82,68} 
\definecolor{mycolor18}{RGB}{94,60,108} 
\definecolor{mycolor19}{RGB}{115,82,68} 
\definecolor{mycolor20}{RGB}{255,183,30} 
\definecolor{mycolor21}{RGB}{13, 152, 186}
\definecolor{mycolor22}{RGB}{255, 223, 0}

\definecolor{cambridgeblue}{rgb}{0.64, 0.76, 0.68}
\definecolor{darkturquoise}{rgb}{0.0, 0.81, 0.82}
\definecolor{turquoiseblue}{rgb}{0.0, 1.0, 0.94}
\definecolor{turquoisegreen}{rgb}{0.63, 0.84, 0.71}
\definecolor{turquoise}{rgb}{0.19, 0.84, 0.78}
\definecolor{verdigris}{rgb}{0.26, 0.7, 0.68}
\definecolor{wheat}{rgb}{0.96, 0.87, 0.7}

\theoremstyle{remark} \newtheorem{theorem}{Theorem}
\theoremstyle{remark} 
\theoremstyle{remark} \newtheorem{definition}{Definition}
\theoremstyle{remark} \newtheorem{remark}{Remark}
\theoremstyle{remark} 

\newtheorem{innercustomgeneric}{\customgenericname}
\providecommand{\customgenericname}{}
\newcommand{\newcustomtheorem}[2]{%
  \newenvironment{#1}[1]
  {%
   \renewcommand\customgenericname{#2}%
   \renewcommand\theinnercustomgeneric{##1}%
   \innercustomgeneric
  }
  {\endinnercustomgeneric}
}
\newcustomtheorem{customcorollary}{Corollary}
\newcustomtheorem{customlemma}{Lemma}
\newcustomtheorem{customproposition}{Proposition}

\allowdisplaybreaks

\begin{document}

	\title{Identification for ISI Gaussian Channels}
		\author{\vspace{2mm} \fontsize{12}{12} \selectfont Mohammad Javad Salariseddigh\IEEEauthorrefmark{1} and Christian Deppe\IEEEauthorrefmark{2}
		\vspace{4mm}
		\\
		\fontsize{11}{11} \selectfont \IEEEauthorrefmark{1}Technical University of Darmstadt\\
		\selectfont\IEEEauthorrefmark{2}Technical University of Braunschweig\vspace{0mm}
	}
	
\maketitle

	\begin{abstract}
		We establish lower and upper bounds for the identification capacity of discrete-time Gaussian channels subject to inter-symbol interference (ISI), a canonical model in wireless communication. Our analysis accounts for deterministic encoders under peak power constraint. A principal finding is that, even when the number of ISI taps scales sub-linearly with the codeword length $n$ as $\sim n^{\kappa}$ with $\kappa \in [0,1/2),$ the number of messages that can be reliably identified grows super-exponentially in $n$ as $\sim 2^{(n \log n)R},$ where $R$ denotes the coding rate.
	\end{abstract}
	
	\begin{IEEEkeywords}
		Identification via channels, deterministic identification, Gaussian channels, inter-symbol interference
	\end{IEEEkeywords}

	\section{Introduction}
	
	In the identification problem \cite{J85,AD89,AADT20}, appropriate communication protocols enable the receiver to ascertain, with high reliability, whether a specific message, relevant to a designated task, was transmitted. In contrast to Shannon's classical message transmission paradigm \cite{S48,Shannon1957}, where the decoder's aim is to reconstruct the exact message from the entire codebook, the identification framework narrows the decoder's focus to a single, predetermined message. Consequently, the central question is: Was this particular message transmitted or not?
	
	Ahlswede introduced and analyzed the identification problem with randomized encoding for discrete memoryless channels (DMCs) in his seminal work \cite{AD89}. He showed that the DMCs feature a codebook size of \emph{double exponential} in the codeword length, $n$ as $\sim 2^{2^{nR}}$ \cite{AD89}. This scaling highlights a significant divergence from classical channel coding, emphasizing the unique characteristics of the identification framework. Continuous alphabet channels without randomization including Gaussian fading \cite{Salariseddigh_IT,Yuan22}, Poisson with ISI and affine Poisson \cite{Salariseddigh_OJCOMS_23,Salariseddigh25_ITW}, binomial channels \cite{Salariseddigh_Binomial_ISIT} and finite-output channels \cite{Colomer_2025}, exhibit a non-standard codebook size growth, scaling \emph{super-exponentially} in $n$ as $\sim 2^{(n\log n)R}.$ The identification problem has generated substantial interest, particularly within the context of post-Shannon theory, semantic, and goal-oriented communication frameworks; see \cite{Goldreich12,6G_PST,Salariseddigh23_BSC_Future_Internet} for details and possible applications. In particular, it may be adopted to event recognition systems in 6G wireless networks \cite{6G_PST} and molecular communications (MCs) \cite{Salariseddigh-TMBMC}. Discussions on identification code constructions can be found in \cite{Verdu02,Zinoghli24,Vorobyev25,Lengerke25}. One of the widely known models in 6G wireless networks is Gaussian channel with inter-symbol interference (ISI). Such a channel serves as a fundamental model within communication theory for analyzing channels with memory \cite{Proakis01} and is widely used in telephony, satellite, and wireless communication systems \cite{G05}.

	In contrast to memoryless channels \cite{S48,Ahlswede1973MultiwayCC}, channels with memory have received less attention in both the Shannon and Ahlswede settings, primarily due to the absence of single-letter capacity characterizations \cite{Csiszar11}, which complicates both analysis and explicit computation. The single user discrete-time Gaussian channel with ISI (DTGC-ISI) and colored noise is studied by Gallager in \cite{RG68} where he introduced the celebrated water-filling approach for the Shannon capacity. Subsequently, capacity bounds based on independent and identically distributed (i.i.d.) input distributions were established in \cite{Shamai91Info}. The DTGC-ISI capacity under a per-symbol average power constraint was derived in \cite{Hirt88,Hirt02,Tsybakov70}. Capacity regions for the colored-noise broadcast and two-user multiaccess channels were derived in \cite{Goldsmith02} and \cite{Cheng93}. In \cite{Farkas08} the classical method of types for the DTGC-ISI with universal decoding was developed. The DTGC-ISI with stochastic and time-varying ISI taps was studied in \cite{Moshksar24}. Universal, low-complexity, and mismatched decoders were studied in \cite{Huleihel15,Faycal01,Huleihel19,Neeser93}.
	
	The joint identification and channel state estimation problem for a DMC was studied in \cite{labidi2024joint}. A generalized identification framework for slow fading DTGC was considered in \cite{Salariseddigh_22_ITW}. \cite{Salariseddigh26_Colored} studies DTGC-ISI with colored noise. Deterministic identification capacity of the ISI-free DTGC under an average power constraint using multi-layer galaxy codes were studied in \cite{Boche_2025_IEEE,Colomer26}, and the feedback case was shown to support arbitrarily large codebooks in \cite{Wiese22}. The works \cite{Sarkar25,Sarkar25_2} characterized sub-exponential identifiable codebook scaling over noiseless and noisy $q$-ary permutation channels. Beyond its role in modeling bandlimited wireless channels with memory, a variant of the DTGC with ISI, characterized by positive channel impulse response (CIR) taps and non-negative inputs, has also been employed to model molecule-counting receivers in MC systems \cite{Gohari16}. To the best of the authors' knowledge, the identification problem for the DTGC-ISI case has not yet been addressed in the literature. We formulate the identification problem under i.i.d. noise with a deterministic encoder and a peak power constraint, and make the following contributions:
	\begin{itemize}[leftmargin=0mm]
		\item[]	\vspace*{-.2mm} \textbf{\textcolor{blau_2b}{Generalized Gaussian Model:}} In wireless communication systems, often the number of channel taps $K$ can be large, particularly in environments with severe multi-path propagation, e.g., in scenarios where the channel delay spread grows proportionally with the communication window, i.e., $K \propto n.$ To ensure accurate channel estimation and effective equalizer performance, a sufficiently long CIR is required \cite{G05}. Also in MC systems, the CIR may span a large number of taps $K$, particularly in bounded diffusive environments where non-degrading signaling molecules persist and accumulate over time \cite{Gohari16}. Moreover, the effective channel memory $K$ may increase not only with the inherent dispersiveness of the propagation medium but also with the symbol rate. Specifically, higher symbol rates (i.e., smaller sampling intervals $T_s$) lead to a larger number of resolvable channel taps, such that $K \propto T_s^{-1}$. Consequently, it is important to analyze the system's asymptotic behavior in regimes characterized by large symbol rates (large $K$) and long $n,$ to understand performance limits under highly time-dispersive, high-rate operating conditions. To account for this phenomenon, we consider a generalized ISI model covering the ISI-free case ($K=1$), fixed memory channels ($K>1$), and regimes where $K$ grows with $n$.
		\item[] \textbf{\textcolor{blau_2b}{Codebook Scale:}} We establish that the codebook size of the DTGC-ISI for deterministic encoding scales super-exponentially in the codeword length $n$ as $\sim 2^{(n \log n)R},$ where $R$ is the coding rate. This behavior mirrors the scaling behavior as observed for a class of input-continuous alphabet channels including the ISI-free DTGC \cite{Salariseddigh_ITW}, where $K=1.$
		\item[] \textbf{\textcolor{blau_2b}{Memory Scale:}} We formulate the capacity problem for a general DTGC-ISI where the number of ISI taps, $K,$ may scale sub-linearly in the codeword length, $n,$ and show that reliable identification is possible for $K(n,\kappa) = n^{\kappa}$ where $\kappa \in [0,1/2)$ is referred to as the ISI rate.
		\item[] \textbf{\textcolor{blau_2b}{Capacity Bounds:}} We derive lower and upper bounds on $R$ for a class of CIRs $\fh$ expressed as functions of its parameter. In particular, these bounds depend on the length of $\fh,$ namely the scaling of $K$ with $n.$ Then, the lower and upper bounds on the capacity, decrease and increase in $\kappa,$ respectively.
	\end{itemize}
	\vspace{0mm}
	\textbf{Notations:}
	Sans-serif letters $\mathsf{X,Y},\ldots$ stand for alphabet sets. Lowercase letters $x, y, \ldots$ are constants or realizations of random variables (RVs) and uppercase letters $X, Y, \ldots$ represent RVs. Lowercase bold symbols $\fx,\fy,\ldots$ show row vectors. Logarithms are in base $2.$ The sequence of integers from $0$ or $1$ to $M$ is denoted by $[\![M]\!].$ The non-negative real and real numbers are represented by $\mathbb{R}_{+}$ and $\mathbb{R}.$ $\lceil \cdot \rceil$ shows the ceiling. The Gamma function $\Gamma(x)$ for non-negative integer $x$ is $\Gamma (x) = (x-1) ! \triangleq \prod_{l=1}^{x-1} l.$ We adopt the standard \emph{Bachmann-Landau asymptotic notations}. $\norm{\mathbf{x}}$ and  stand for $\ell_2$-norm and $\ell_{\infty}$-norm, respectively. A hypersphere of dimension $n,$ radius $r,$ and center $\mathbf{x}_0$ is defined as $\S_{\fx_0}(n,r) = \{\fx\in\mathbb{R}^n : \norm{\fx-\fx_0} \leq r \}.$ $\mathsf{Q}_{\f0} = \{\fx \in \mathbb{R}^n : |x_t| \leq U, \forall \, t \in [\![n]\!] \}$ indicates an $n$-dimensional cube with edge $U$ and centered at $\mathbf{0} = (0)_{t=1}^n.$ The discrete-time Fourier transform (DTFT) of $(x_t)_{t=0}^{K-1}$ is $X(\omega) = \sum_{k=0}^{K-1} x_ke^{-j\omega k}.$ $Q(.)$ denotes the Gaussian $Q$-function.

	\section{System Model and Coding Preliminaries}
	\label{Sec.SysModel}
	This section presents the DTGC-ISI. We then introduce the adopted system model and establish the coding preliminaries.
	
	\subsection{Discrete-Time Gaussian Channel with ISI}
	We consider an identification-oriented communication model in which the decoder departs from classical message reconstruction \cite{S48} and instead performs a binary hypothesis test: determining, under suitable reliability constraints, whether a prescribed target message was transmitted. A coding scheme is employed to induce a structured encoder-decoder interaction over $n$ channel uses, whereby each message is represented by a length-$n$ codeword transmitted through the channel. We assume that the signal suffers $K$ degrees of ISI memory and experiences an additive i.i.d. Gaussian noise. The memory is characterized by a finite-length sequence $\fh = (h_k)_{k=0}^{K-1}$ referred to as the CIR where $h_k \in \mathbb{R} \,, \forall k$ and $h_0h_{K-1} \neq 0.$ In the sequel, $\bG$ denotes the DTGC-ISI characterized by the CIR $\fh.$ Let $X_t \in \mathbb{R}$ and $Y_t \in \mathbb{R}$ denote RVs modeling the sent and observed symbols at the transmitter and the receiver, respectively. The channel input-output relation is specified letter-by-letter as follows\vspace*{-.5mm}
	\begin{align}
		\label{Eq.Law_Letter}
		Y_t = x_t^{\fh} + Z_t,
	\end{align}
	where $x_t^{\fh} \triangleq \sum_{k=0}^{K-1} h_k x_{t-k}$ is the average signal observed at the receiver. The coefficient $h_k$ is known as the CIR tap at time $k\,, \forall k \in [\![K-1]\!]$ which does not scale with the codeword length $n.$ Moreover $Z_t$ stands for the noise samples with i.i.d. Gaussian distribution with zero mean and finite constant variance $\sigma_Z^2 > 0,$ i.e., $Z_t \overset{\text{\tiny i.i.d.}}{\sim} \N(0,\sigma_Z^2), \forall t \in [\![n]\!].$ Since the channel is dispersive, each output symbol depends on the $K$ most recent inputs, yielding an output vector of length $\bar{n} = n + K - 1$. Therefore, based on the letter-wise channel relation of $\bG$ in \eqref{Eq.Law_Letter}, the channel law admits the following matrix-vector representation.\vspace*{-.5mm}
	\begin{align}
		\mathbf{Y} = \fH \fx + \mathbf{Z},
	\end{align}
	where $\fH$ is a full-rank Toeplitz convolution matrix with $\fH_{\bar{n} \times n} = [h_{i-j}]$ and $h_k=0$ outside $[\![K-1]\!].$ Let $\fZ \allowbreak = \allowbreak (Z_t)_{t=1}^{\bar{n}}$ and $\fY \allowbreak = \allowbreak (Y_t)_{t=1}^{\bar{n}}$ be the noise and output vectors, respectively. Note by setting $\fH \fx \allowbreak = \allowbreak \fx^{\fh} \allowbreak = \allowbreak (x_t^{\fh})_{t=1}^{\bar{n}}$, the noise density reads
	\begin{align}
		f_{\fZ}(\fz) = (2\pi\sigma_Z^2)^{-\bar{n}/2} \exp [ - \| \fy - \fx^{\fh} \|^2 / 2\sigma_Z^2 ].
	\end{align}
	The input strings fulfill $|x_t| \leq P_{\rm max}, \forall t \in [\![n]\!],$ where $P_{\rm max}> 0$ constrains the energy of signal per symbol.
	\subsection{Identification Coding}
	Here, we present the code definition and capacity for $\bG.$
	\begin{definition}[DTGC-ISI Code]
		\label{Def.ISI-Poisson-Code}
		Let $0<e_1,e_2<1$ be fixed error constraints, and let $M(n,R),K(n,\kappa) \in \mathbb N$ denote the number of messages and ISI taps where $n,R$ and $\kappa$ are the codeword length, the coding rate and the ISI rate, respectively. An $(n,\allowbreak M(n,R),\allowbreak K(n,\kappa), \allowbreak e_1, \allowbreak e_2)$-code for $\bG$ consists of a codebook $\mathsf{C} = \{ \fc_i \}_{i=1}^{M_n}$ where $\fc_i = ( c_{i,t} )_{t=1}^n$ fulfills $|c_{i,t}| \leq P_{\rm max} , \forall i \in [\![M]\!],\,\forall t \in [\![n]\!]$ and a collection of decoding regions $\D = \{ \mathsf{D}_i \}_{i=1}^{M_n}.$
		\begin{figure}
			\centering
			\vspace{14mm}
			\scalebox{.88}{
	\tikzstyle{farbverlauf} = [ top color=white, bottom color=white]
	\tikzstyle{block1} = [draw, top color = white, middle color = turquoisegreen!50, rectangle, rounded corners, minimum height=2em, minimum width=2.5em]
	
	\tikzstyle{block2} = [draw, circle,inner sep=0pt, minimum size=5mm, top color = white, middle color = gray!10]
	
	\tikzstyle{block3} = [draw, top color = white, middle color = turquoisegreen!50, rectangle, rounded corners, minimum height=2em, minimum width=2.5em]

	\tikzstyle{input} = [coordinate]
	\tikzstyle{sum} = [draw, circle,inner sep=0pt, minimum size=5mm, top color = white, middle color =gray!10]
	\tikzstyle{arrow}=[draw,->]
	\begin{tikzpicture}[overlay]
		\tikzmath{\x = -5.25;}
		\node[] at (\x,0) (M) {$i$};
		\node[block1,right=.7cm of M] (enc) {Enc};
		\node[block2, right=10mm of enc] (multiplier) {$*$};
		\node[sum, right=25mm of multiplier] (adder) {$+$};
		\node[block3, right=8mm of adder] (dec) {Dec};
		\node[right=.7cm of dec] (Output) {$\text{\small Yes/No}$};
		\node[above=.7cm of Output] (Target) {$j$};
		\node[above=.7cm of adder] (noise) {$Z_t$};
		\node[above=.7cm of multiplier] (CIR) {$\fh = (h_k)_{k=0}^{K-1}$};
		
		\draw[>=open triangle 45,->] (M) -- (enc);
		\draw[>=open triangle 45,->] (enc) --node[above]{$c_{i,t}$} (multiplier);
		\draw[>=open triangle 45,->] (multiplier) --node[above]{$c_{i,t}^{\fh}$} (adder);
		\draw[>=open triangle 45,->] (CIR) -- (multiplier);
		\draw[>=open triangle 45,->] (noise) -- (adder);
		\draw[>=open triangle 45,->] (adder) --node[above]{$Y_t$} (dec);
		\draw[>=open triangle 45,->] (dec) -- (Output);
		\draw[dashed,>=open triangle 45,<-] (dec) |- (Target);
	\end{tikzpicture}
}
			\vspace{3mm}
			\caption{Communication chain for identification in a generic wireless system modeled by $\bG$. The sender maps message $i$ to a codeword $\fc_i = (c_{i,t})_{t=1}^n$. Each channel output symbol results from the convolution of the previous $K$ inputs with the ISI coefficients. Given a query message $j$ and the channel output $\fY = (Y_t)_{t=1}^{\bar{n}}$, the receiver decides whether $j=i$ or not.}
			\vspace{-4mm}
			\label{Fig.SG}
		\end{figure}
		Given a message $i \in [\![M]\!]$, the encoder sends $\mathbf{c}_i$, and the decoder asks was a target message $j,$ sent or not? See Figure \ref{Fig.SG}. In this framework, two decoding error events may occur: the false rejection of the sent (true) codeword and the false acceptance of an incorrect codeword. These events correspond to type I and type II errors, respectively, and are formally characterized as follows:
		\begin{align*}
			\hspace{-1mm} P_{e,1}(i) \hspace{-.6mm} = \hspace{-.6mm} 1 \hspace{-.6mm} - \hspace{-1mm} \int_{\mathsf{D}_i} \hspace{-2mm} f_{\fZ}(\fy - \fc_i^{\fh}) \, d\fy,\, P_{e,2}(i,j) \hspace{-1mm} = \hspace{-1mm} \int_{\mathsf{D}_j} \hspace{-2mm} f_{\fZ}(\fy - \fc_i^{\fh}) \, d\fy .
		\end{align*}
		The maximal error probabilities read $\lambda_1 \triangleq  \max_{i}P_{e,1}(i)$ and $\lambda_2 \triangleq \max_{ i\neq j} P_{e,2}(i,j),$ and satisfy $\lambda_1 \leq e_1$ and $\lambda_2 \leq e_2.$
		\qed
	\end{definition}
	\begin{definition}[DTGC-ISI Capacity]A rate $R>0$ is called achievable if for every $e_1, \allowbreak e_2>0$ and sufficiently large $n,$ there exists an $(n,\allowbreak M(n\allowbreak,R),\allowbreak K(n,\allowbreak \kappa), \allowbreak e_1, \allowbreak e_2)$-code. The operational identification capacity of $\bG$ is defined as the supremum of all achievable rates, and is denoted by $\mathbb{C}_{\text{I}}(\bG).$
	\qed
	\end{definition}
	\begin{remark}
		For simplicity, throughout the paper we use $M$ and $K$ in place of $M(n,R)$ and $K(n,\kappa),$ respectively, unless stated otherwise. We explicitly indicate when $K \geq 1$ is a fixed constant by specifying $\kappa=0$ or $\kappa \to 0$ as $n \to \infty.$
	\end{remark}
	\section{Identification Capacity of The DTGC-ISI}
	\label{Sec.Res}
	We begin by posing two fundamental questions that frame the design and analysis of wireless systems with ISI, thereby establishing the framework for the developments that follow. The number of ISI taps is defined as $K = \left\lceil T_{\rm c}/T_s \right\rceil,$	where $T_{\rm c}$ and $T_s$ denote the channel delay spread duration and the symbol duration, respectively \cite{G05}. In practice, $T_{\rm c},$ and thus $K,$ may be large. The results herein hold for both fixed $K$ and regimes in which $K$ grows with $n$ (e.g., as $T_s$ decreases). Next, we present the capacity theorem and sketch its proof, with particular emphasis on addressing the following questions:
	\begin{itemize}[leftmargin=0mm]
		\item[] \textbf{\textcolor{blau_2b}{Size of Identifiable Messages:}} What is the channel's asymptotic message distinguishability, i.e., maximal number of the codewords that can be reliably decoded as $n \to \infty$?
		\item[] \textbf{\textcolor{blau_2b}{Characteristics of CIR:}} What structure must the CIR taps $h_k$ satisfy for reliable identification? Specifically, what properties may a valid CIR sequence exhibit (e.g., sparsity, convergence, or spectral characteristics)? Must the CIR depth $K$ remain fixed, or can reliable identification still hold for growing depth, i.e., $K = O(n^{\kappa})$ with $\kappa \in [0,1]$?
	\end{itemize}
	\subsection{Main Results}
	\label{Subsec.Main_Results}
	The identification capacity depends on the properties of the CIR $\fh$ and how its features scale with $n$. We impose the following conditions, where \textbf{C1} is required for both achievability and converse, and \textbf{C2} is needed only for the achievability.
	\begin{itemize}[leftmargin=*]
		\item \textbf{\textcolor{blau_2b}{C1 (Stability):}} We assume that the ISI channel $\bG$ is stable, i.e., the CIR meets an absolute summable condition: $\sum_{k=0}^{K-1} |h_k| < \infty,$ which implies that the individual ISI taps $h_k$ fulfill: $|h_k| \leq L < \infty\,, \forall k \in [\![K-1]\!]$ where $L$ is a finite fixed constant independent of $h_k$ and $n.$
		\item \textbf{\textcolor{blau_2b}{C2 (Frequency Spectrum):}} Let $H(\omega)$ be the frequency spectrum, i.e., the DTFT transform of the CIR vector $\fh.$ Then, we assume that the kernel $H(\omega)$ is bounded away from zero, that is, $\inf_{\omega\in[-\pi,\pi]}|H(\omega)| \triangleq \sqrt{2\pi} H_{\rm min} > 0$. See Remark \ref{Rem.Spec_Nulls} for cases where $\inf_{\omega} |H(\omega)| = 0.$
	\end{itemize}
	Next, we formally present the capacity theorem for $\bG.$
	\begin{theorem}
		\label{Th.ISI-Gauss-Cap}
		Consider the DTGC-ISI, $\bG,$ with CIR $\fh$ fulfilling conditions \textbf{C1}-\textbf{C2} and assume that the number of ISI channel taps grows sub-linearly with the codeword length, i.e., $K(n,\kappa) = n^{\kappa},$ where $\kappa \in [0,1/2).$ Then, the identification capacity of $\bG$ subject to peak power constraint according to Definition~\ref{Def.ISI-Poisson-Code} and in the super-exponential codebook size scale, i.e., $M(n,R) = 2^{(n\log n)R},$ reads
		\begin{align}
			\label{Ineq.LU}
			\frac{1 - 2\kappa}{4} \leq \mathbb{C}_{\rm I}(\bG) \leq 1+\kappa.
		\end{align}
	\end{theorem}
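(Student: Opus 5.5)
The proof splits into an achievability part (lower bound) and a converse part (upper bound). Both follow the sphere-packing route used for the ISI-free DTGC and for Poisson channels with ISI.

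\textbf{Achievability.} We build the codebook as a packing of non-overlapping hyperspheres of radius $\sqrt{n\epsilon_n}$ inside the cube $\mathsf{Q}_{\f0}$ with edge $P_{\rm max}$. We take $\epsilon_n = a/n^{(1-2\kappa-b)/2}$ for an arbitrarily small $b>0$. By a Minkowski--Hlawka type saturated-packing argument, the packing density is at least $2^{-n}$. Hence
\[
M \geq 2^{-n}\,\mathrm{Vol}(\mathsf{Q}_{\f0})/\mathrm{Vol}(\S(n,\sqrt{n\epsilon_n})).
\]
Using $\Gamma(n/2+1)\approx (n/2e)^{n/2}$, this gives $\log M \geq -\tfrac n2\log\epsilon_n - O(n) = \tfrac{1-2\kappa-b}{4}\, n\log n - O(n)$. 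Letting $b\to 0$ yields the rate $(1-2\kappa)/4$.

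The decoder for message $j$ is a distance test. It accepts if
\[
\|\fy - \fc_j^{\fh}\|^2 \leq \bar n(\sigma_Z^2+\delta_n),
\]
with $\delta_n$ of order $\epsilon_n H_{\min}^2/3$ (suitably normalized).
\begin{itemize}
\item \emph{Type I error.} By Chebyshev, the probability is at most $\mathrm{Var}(\|\fZ\|^2)/(\bar n\delta_n)^2 = 2\sigma_Z^4/(\bar n\delta_n^2)$. This tends to zero since $n\epsilon_n^2\to\infty$.
\item \emph{Type II error.} Write $\|\fy-\fc_j^{\fh}\|^2 = \|\fH(\fc_i-\fc_j)\|^2 + 2\langle \fH(\fc_i-\fc_j),\fZ\rangle + \|\fZ\|^2$. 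Condition \textbf{C2} enters through Szegő/Grenander-type bounds on the Toeplitz matrix $\fH^T\fH$: its smallest eigenvalue is bounded below by $2\pi H_{\min}^2$ for every $n$, since $\fH^T\fH$ is the Toeplitz matrix of $|H(\omega)|^2$. Therefore $\|\fH(\fc_i-\fc_j)\|^2 \geq 2\pi H_{\min}^2\, n\epsilon_n$. The cross term is Gaussian with variance $4\sigma_Z^2\|\fH(\fc_i-\fc_j)\|^2$, and we control it by Chebyshev. The upper eigenvalue bound $\|\fH\|\le \sum|h_k|\le KL$ from \textbf{C1} controls the ratio of the signal term to its fluctuation.
\end{itemize}

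\textbf{Main obstacle.} The key difficulty is the length mismatch between the output length $\bar n = n+K-1$ and $n$, together with the growth of $K$. The energy of $\fH(\fc_i-\fc_j)$ must dominate $\bar n\delta_n$ plus the noise fluctuation $\sqrt{\bar n}$. The edge effects and the $\bar n$ normalization cost roughly a factor $K$, which is where $\kappa$ enters the exponent of $\epsilon_n$. This is the step I expect to be hardest. I would first try to bound the boundary contribution by $K^2L^2P_{\max}^2$ and require $n\epsilon_n \gg K\sqrt n$, i.e.\ $\epsilon_n\gg n^{\kappa-1/2}$. That requirement is what produces the constraint $\kappa<1/2$ and the rate $(1-2\kappa)/4$.

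\textbf{Converse.} Assume a good code with rate exceeding $1+\kappa$. First show a minimum-distance property: any two codewords must satisfy $\|\fH(\fc_i-\fc_j)\| \geq \theta_n$ with $\theta_n = P_{\max}/n^{1+b}$. Otherwise the two output distributions are too close in total variation, and $\lambda_1+\lambda_2<1$ would be violated. This comparison uses the Gaussian shift formula, total variation $=2Q(\cdot)-1$ applied to $\|\fH(\fc_i-\fc_j)\|/2\sigma_Z$, and continuity. To translate this to inputs, use \textbf{C1}: $\|\fH(\fc_i-\fc_j)\|\leq \sqrt{K}\,\|\fh\|_1\|\fc_i-\fc_j\|$, roughly $KL\|\fc_i-\fc_j\|$. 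So the inputs are $\theta_n/(KL)$-separated. Balls of radius $\theta_n/(2KL)$ around the codewords are then disjoint and lie in a slightly enlarged cube, which gives
\[
\log M \leq n\log\!\bigl(P_{\max} n^{1+b}K\bigr)+O(n) = (1+b+\kappa)\,n\log n + O(n).
\]
Letting $b\to0$ gives the upper bound $1+\kappa$.
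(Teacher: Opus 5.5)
Your achievability argument follows the paper's route. It uses a saturated packing with $\epsilon_n\propto n^{-(1-2\kappa-b)/2}$, a distance decoder and Chebyshev, together with the spectral lower bound from \textbf{C2}; your Toeplitz-symbol argument is equivalent to the paper's Parseval step in Lemma~1. Like the paper, you use the crude bound $\|\fH(\fc_i-\fc_j)\|^2\le 4nK^2L^2P_{\max}^2$ to keep the cross term below the fixed threshold $\bar n\delta_n$. That comparison, not an edge effect, is what costs the factor $K$.

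The gap is in the converse. In the final count you bound the number of disjoint balls of radius $\rho=\theta_n/(2KL)$ in a cube of side about $2P_{\max}$ by $n\log(2P_{\max}/\rho)+O(n)$. That treats the ball volume as $\rho^n$ up to $2^{O(n)}$. The actual volume is $(\sqrt{\pi}\rho)^n/\Gamma(n/2+1)$, and $\log\Gamma(n/2+1)=\tfrac n2\log n+O(n)$, which is the very term you used in the achievability. The correct bound is
\[
\log M\le n\log\frac{4KLP_{\max}}{\theta_n}+\frac n2\log n+O(n).
\]
With your choice $\theta_n=P_{\max}/n^{1+b}$ this only gives $\log M\le(\tfrac32+\kappa+b)\,n\log n+O(n)$, i.e.\ $\mathbb{C}_{\rm I}(\bG)\le\tfrac32+\kappa$, not $1+\kappa$. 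Your displayed bound cannot hold in general. The saturated-packing estimate from your own achievability already produces $2\rho$-separated sets of size $2^{n\log(2P_{\max}/\rho)+\frac n2\log n-O(n)}$ in that cube.

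The repair is in the separation you feed into the packing. The paper shows output separation of order $\alpha_n=a\bar n^{-(1+2b)/2}\sim n^{-1/2-b}$. It then packs in $\mathbb{R}^{\bar n}$ inside a cube of side about $KLP_{\max}$, which gives $\bar n\log(KLP_{\max}/\alpha_n)+\tfrac12\bar n\log\bar n=(1+\kappa+b)\,n\log n+O(n)$. Your input-space packing gives the same exponent once $\theta_n\gtrsim n^{-1/2-b}$.

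Your total-variation argument in fact gives more than you used. Since $1-e_1-e_2\le\mathrm{TV}=1-2Q\bigl(\|\fH(\fc_i-\fc_j)\|/2\sigma_Z\bigr)$, every pair is separated by the constant $2\sigma_Z Q^{-1}((e_1+e_2)/2)>0$ whenever $e_1+e_2<1$. (Note the sign: you wrote $2Q(\cdot)-1$.) With that separation your packing yields $(\tfrac12+\kappa)\,n\log n$, which implies the stated $1+\kappa$.

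Finally, the transfer step should use $\|\fH\|\le\|\fh\|_1\le KL$, or equivalently $\sqrt K\|\fh\|_2\le KL$. The bound $\sqrt K\|\fh\|_1\le K^{3/2}L$ as you wrote it would turn $\kappa$ into $\tfrac32\kappa$.
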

	\begin{proof}
		Proofs for the achievability and the converse are provided in Subsections~\ref{Subsec.Achievability} and \ref{Subsec.Converse}, respectively.
	\end{proof}
	
	\begin{remark}
		The result in Theorem~\ref{Th.ISI-Gauss-Cap} comprises the following three special cases in terms of $K$:
		\begin{itemize}[leftmargin=*]
			\item \textbf{\textcolor{blau_2b}{$K=1$}}: This case corresponds to an ISI-free setup, which holds when the symbol duration is sufficiently large ($T_s \geq T_{\rm c}$). Under this condition $K=1$ and $\kappa=0$ and the bounds in \eqref{Ineq.LU} recover the previously known results in \cite[Th.~10]{Salariseddigh_IT} by setting $\kappa = 0$ and fading coefficient $G =1.$
			\item \textbf{\textcolor{blau_2b}{Constant $K>1$}}: When $T_s$ is fixed and $T_s < T_{\rm c}$, the constant $K>1$ implies that $\kappa \to 0$ as $n \to \infty.$ Specifically, $\kappa$ decreases monotonically to zero with rate $\kappa = \Theta(1/\ln n).$ Interestingly, Theorem~\ref{Th.ISI-Gauss-Cap} shows that the DTGC-ISI capacity bounds coincide with those of the ISI-free DTGC in \cite{Salariseddigh_ITW}.
			\item \textbf{\textcolor{blau_2b}{Growing $K$}}: Theorem~\ref{Th.ISI-Gauss-Cap} shows that reliable identification remains feasible even when $K$ is strictly monotonic in $n$ and $\kappa,$ for $\kappa \in (0,1/2),$ and $K \sim n^{\kappa}.$ The effect of $\kappa$ appears in the bounds of \eqref{Ineq.LU}, where larger $\kappa$ tightens the lower bound downward and loosens the upper bound upward.
		\end{itemize}
	\end{remark}
	\subsection{Achievability}
	\label{Subsec.Achievability}
	The achievability proof consists of the following steps.
	\begin{itemize}[leftmargin=*]
		\item Coding scheme based on the optimal arrangement of hyperspheres within the eligible input space.
		\item Utilizing a maximum likelihood distance decoder to characterize the \emph{achievable} rates induced by the coding scheme.
	\end{itemize}
	
	\textbf{\textcolor{blau_2b}{Codebook Construction:}}
	In the following, we construct the codebook $\mathsf{C} = \{ \fc_i \} \subset \mathbb{R}^n,$ with $i \in [\![M]\!],$ referred to as the original codebook where we restrict ourselves to the codewords that meet the condition $|c_{i,t}| \leq P_{\rm max}.$ This codebook induces a secondary, \emph{convoluted codebook}, whose properties depend on the structure of the ISI channel and the original codebook $\mathsf{C}$. We denote this convoluted codebook by	$\mathsf{C}^{\fh} = \{ \mathbf{c}_i^{\fh} \} \subset \mathbb{R}^{\bar{n}} ,$ with $i \in [\![M]\!],$ where each $\mathbf{c}_i^{\fh} \triangleq (c_{i,t}^{\fh})_{t=1}^{\bar{n}}$ is referred to as a convoluted codeword whose symbols are formed as a linear combination (convolution) of the $K$ most recent symbols of codeword $\fc_i \triangleq (c_{i,t})_{t=1}^n$ and CIR vector $\fh,$ i.e., $c_{i,t}^{\fh} \triangleq \sum_{k=0}^{K-1} h_k c_{i,t-k}\,, \forall t \in [\![\bar{n}]\!]$ where $c_{i,t} = 0,\, \forall t \leq 0.$ With this notation, we define the feasible original codebook and the corresponding convoluted codebook as follows:
	\begin{itemize}[leftmargin=*]
		\item $\mathsf{C} \hspace{-.3mm} \triangleq \hspace{-.3mm} \big\{ \fc_i \in \mathbb{R}^n:\; |c_{i,t}| \leq P_{\rm max} , \forall i \in [\![M]\!], \forall t \in [\![n]\!] \big\}.$
		\item $\mathsf{C}^{\fh} \hspace{-.3mm} \triangleq \hspace{-.3mm} \big\{ \fc_i^{\fh} \in \mathbb{R}^{\bar{n}} :\, \mathbf{c}_i^{\fh} = (c_{i,t}^{\fh} )_{t=1}^{\bar{n}}, \, \fc_i \in \mathsf{C}, \forall i \in [\![M]\!] \big\}.$
	\end{itemize}
	
	\textbf{\textcolor{blau_2b}{Rate Analysis:}} We use a packing arrangement of non-overlapping hyperspheres of radius $r_0 = \sqrt{\bar{n}\epsilon_n}$ in the hypercube $\mathsf{C}$ with edge $P_{\rm max},$ where 
	\begin{equation}
		\epsilon_n = a / (H_{\rm min}^2 n^{( 1 - ( 2\kappa + b)) / 2}),
	\end{equation}
	 with $a,b > 0$ being fixed and arbitrarily small constants, respectively. Let $\mathscr{S}$ denote a sphere packing, i.e., an arrangement of $M$ non-overlapping spheres $\S_{\fc_i}(n,r_0),\, i \in [\![M]\!],$ that are packed in the hypercube $\mathsf{Q}_{\f0}(n,P_{\rm max}).$ In contrast to conventional sphere packing in coding theory \cite{CHSN13}, we require that the centers of disjoint spheres are in $\mathsf{Q}_{\f0}(n,P_{\rm max}),$ and that each sphere exhibits a non-empty intersection with $\mathsf{Q}_{\f0}(n,P_{\rm max}).$ Note that, the packing density \cite{CHSN13} is
	\begin{align}
		\label{Eq.Density}
		\Updelta_n(\mathscr{S}) \triangleq \frac{\text{Vol}\big(\mathsf{Q}_{\f0}(n,P_{\rm max}) \cap \bigcup_{i=1}^{M}\S_{\fc_i}(n,r_0) \big)}{\text{Vol}\big(\mathsf{Q}_{\f0}(n,P_{\rm max}) \big)} .
	\end{align}
	A saturated packing argument is invoked here, analogous to the approach used in the Minkowski–Hlawka theorem for saturated packings, as referenced in \cite{CHSN13}. Specifically, consider a saturated packing of spheres \(\bigcup_{i=1}^{M(n,R)} \S_{\fc_i}(n,r_0)\) with radius \( r_0 = \sqrt{\bar{n}\epsilon_n} \), embedded within the hypercube \(\mathsf{Q}_{\f0}(n,P_{\rm max})\). In general, the volume of a hypersphere of radius \( r \) is given by \cite[Eq.~(16)]{CHSN13}, 
	\begin{align}
		\label{Eq.VolS}
		\text{Vol}\left(\S_{\fc_i}(n,r)\right) = \frac{(\sqrt{\pi}r)^{n} }{\Gamma(n/2+1)} .
	\end{align}
	Note that density of such arrangement fulfills \cite[Sec.~IV]{Salariseddigh-TMBMC}
	\begin{align}
		\label{Ineq.Density}
		2^{-n} \leq \Updelta_n(\mathscr{S}) \leq 2^{-0.599n} .
	\end{align}
	We assign a codeword to the center $\fc_i$ of each small hypersphere. The codewords meet $\| \fc_i \|_{\infty} \leq P_{\rm max}.$ Since the volume of each sphere is equal to $\text{Vol}(\S_{\fc_1}(n,r_0))$ and the centers of all spheres lie inside the cube $\mathsf{Q}_{\f0}(n,P_{\rm max}),$ the total number of packed spheres, $M,$ is bounded by
	\begin{align}
		M & = \frac{\text{Vol}\big(\bigcup_{i=1}^{M}\S_{\fc_i}(n,r_0)\big)}{\text{Vol}(\S_{\fc_1}(n,r_0))} 
		\nonumber\\&
		\geq \frac{\text{Vol}\big(\mathsf{Q}_{\f0}(n,P_{\rm max}) \cap \bigcup_{i=1}^{M}\S_{\fc_i}(n,r_0) \big)}{\text{Vol}(\S_{\fc_1}(n,r_0))}
		\nonumber\\&
		\stackrel{(a)}{\geq} 2^{-n} \cdot \frac{(2P_{\rm max})^n}{\text{Vol}(\S_{\fc_1}(n,r_0))} ,
	\end{align}
	where $(a)$ exploits \eqref{Eq.Density} and \eqref{Ineq.Density}. The above bound can be simplified as follows
	\begin{align}
		\label{Eq.Log_M_0}
		\log M
		& \geq \log \big( P_{\rm max}^n \big) - \log \big(\text{Vol}(\S_{\fc_1}(n,r_0)) \big) - n/2
		\nonumber\\&
		\stackrel{(a)}{\geq} n \log (2P_{\rm max}) - n \log r_0 + \floor{n/2} \log \floor{n/2}
		\nonumber\\&
		- \floor{n/2} \log e + o \big( \floor{n/2} \big) - n/2 ,
	\end{align}
	where $(a)$ uses \eqref{Eq.VolS} and Stirling's approximation, namely, $\log n! = n \log n - n \log e + o(n)$ \cite[P.~52]{F66} with setting $n$ with $\floor{n/2} \in \mathbb{Z},$ and since
	\begin{align}
		\label{Ineq.Gamma_LB}
		\Gamma (( n/2) + 1 ) & \stackrel{(a)}{=} (n/2) \cdot \Gamma \left( n/2 \right)
		\nonumber\\&
		\stackrel{(b)}{\geq} \floor{n/2} \cdot \Gamma \big( \floor{n/2} \big) \stackrel{(c)}{\triangleq} \floor{n/2} ! ,
	\end{align}
	where $(a)$ holds since $\Gamma(x+1)=x\Gamma(x),\, \forall x \in \mathbb{R}_{+},$ $(b)$ follows from $\floor{n/2} \leq n/2$ and monotonicity of the Gamma function for $n \geq 4,$ $(c)$ holds uses $\Gamma \big( \floor{n/2} \big) = (\floor{n/2} - 1 ) !$ for $n \geq 4.$ Now, observe
	\begin{align}
		r_0 = \allowbreak \sqrt{\bar{n}\epsilon_n} & = \sqrt{n\epsilon_n(1+(K-1)/n)}
		\nonumber\\&
		\sim \sqrt{n\epsilon_n} 
		= \sqrt{a} H_{\rm min}^{-1} n^{(1 + 2\kappa + b) / 4}.
	\end{align}
	Thereby, we obtain the following bound on the logarithm of the number of packed spheres,
	\begin{align}
		& \log M \stackrel{(a)}{\geq} n \log \Big( \frac{2P_{\rm max}H_{\rm min}}{\sqrt{ae}} \Big) - \frac{n}{2} \log (n\epsilon_n(1+K/n))
		\nonumber\\&
		+ ( (n/2) - 1 ) \log ( (n/2) - 1 ) + \mathcal{O}(n) =
		\nonumber\\&
	    \left( \frac{2 - (1 + 2\kappa + b)}{4} \right) n \log n + n \log \Big( \frac{2P_{\rm max} H_{\rm min}}{\sqrt{ae}} \Big)  + \mathcal{O}(n) ,
		\label{Eq.Log_M}
	\end{align}
	where $(a)$ follows by $\floor{n/2} > (n/2) - 1$ and  $\log(t-1) \geq \log t - 1$ for $t \geq 2$ and $\floor{n/2} \leq (n/2)$ for integer $n.$ Consequently, the leading-order term in \eqref{Eq.Log_M} scales as $n \log n$. To ensure the derived lower bound on the achievable rate $R$ remains finite in the asymptotic limit, this expression dictates the requisite scaling for $M$. Specifically, $M$ must scale according to $M = 2^{(n \log n)R}$. Thereby,
	\begin{equation*}
		\resizebox{.491\textwidth}{!}{
			$\begin{aligned}
				\hspace{-1mm} R \geq \hspace{-.6mm} \frac{1}{n \log n} \hspace{-.7mm} \left[ \left( \frac{2 - (1 + 2\kappa + b)}{4} \right) \hspace{-.4mm} n \log n + n \log \left( \frac{2P_{\text{max}}H_{\rm min}}{\sqrt{ae}} \right) + \mathcal{O}(n) \right]
			\end{aligned}$}
	\end{equation*}
	which tends to $(1-2\kappa)/4$ when $n \to \infty$ and $b \rightarrow 0.$

	\begin{customlemma}{1}[Minimum Distance of the Convoluted Codebook]
		\label{Lem.Min_dis_con_cb}
		 Minimum distance of $\mathsf{C}^{\fh}$ fulfills: $	\| \fc_i^{\fh} - \fc_j^{\fh} \| \geq H_{\rm min} \| \fc_i - \fc_j \|$ where $\inf_{\omega \in [-\pi,\pi]} |H(\omega)| / \sqrt{2\pi} \triangleq H_{\rm min} > 0.$
		\begin{proof}
			Observe that employing the Parseval's theorem \cite{Oppenheim99} for a causal and finite-length sequence $\fc_i^{\fh} - \fc_j^{\fh},$ we obtain:
				\vspace{-2mm}
			\begin{align*}
				\hspace{-2mm} \big\| \fc_i^{\fh} - \fc_j^{\fh} \big\|^2
				\hspace{-1mm} \stackrel{(a)}{\geq} \hspace{-1mm} H_{\rm min}^2 \hspace{-1.3mm} \int_{-\pi}^{\pi} \hspace{-3.5mm} |C_i(\omega) - C_j(\omega) |^2 d \omega
				\hspace{-1mm} \stackrel{(b)}{=} \hspace{-1mm} H_{\rm min}^2 \| \fc_i - \fc_j \|^2, \hspace{-2mm}
			\end{align*}
			where $(a)$ uses the Parseval's theorem, linearity of the DTFT and since the DTFT of convolution is product of DTFTs and definition of $H_{\rm min}$ and $(b)$ uses the Parseval's theorem.
		\end{proof}
	\end{customlemma}
	Next, we present two lemmas giving \emph{sufficient} conditions under which $H_{\rm min}$ in Lemma~\ref{Lem.Min_dis_con_cb} remains strictly positive. We first consider constant $K$, covered by Lemma~\ref{Lem.Constant_K}, and then the general regime where $K$ grows with $n$, parameterized by $\kappa\in(0,1/2)$. Unlike the fixed-$K$ case, the spectral minimum may asymptotically vanish; Lemma~\ref{Lem.Increasing_K} provides a sufficient condition to preclude this behavior.
	\begin{customlemma}{2}[Constant $K$-Regime]
	\label{Lem.Constant_K}
		Consider $\bG$ with constant number of ISI taps, i.e., $K(n,\kappa) = \mathcal{O}(1)$ ($\kappa=0$ or $\kappa \to 0 \allowbreak \text{ as } n \to \infty.$) Suppose $H(\omega)$ has no zero on the unit circle, and let $z_k$ be the roots of $z^{K-2} H(z) = h_0 \allowbreak \prod_{k=0}^{K-2} \allowbreak (z - z_k).$ Then, $H_{\rm min}$ is uniformly bounded away by zero, i.e., there exists a constant $H_{\rm LB} > 0$ such that $H_{\rm min} \geq H_{\rm LB} / \sqrt{2\pi} > 0.$
	\end{customlemma}
	\begin{proof}
		The proof is provided in Appendix \ref{App.LB_FS}.
	\end{proof}

	\begin{customlemma}{3}[Increasing $K$-Regime]
		\label{Lem.Increasing_K}
		Consider $\bG$ with increasing $K,$ where $\kappa \in (0,1/2)$ is fixed. Suppose that the CIR $\fh$ satisfies $|h_0| > \sum_{k=1}^{K-1} |h_k|$ and $m \triangleq |h_0| - \sum_{k=1}^{K-1} |h_k| > 0$ where $m$ is a fixed constant. Then, 
		\begin{equation}
			\| \fc_i^{\fh} - \fc_j^{\fh} \| \geq m \| \fc_i - \fc_j \|.
		\end{equation}

	\end{customlemma}
	\begin{proof}
		The proof is provided in Appendix \ref{App.Increasing_K}.
	\end{proof}
	\begin{remark}
		Lemmas~\ref{Lem.Constant_K} and \ref{Lem.Increasing_K} address distinct asymptotic parameter regimes: fixed $K$ and growing $K$, respectively. Both provide sufficient conditions ensuring $H_{\rm min}$ remains bounded away from zero, establishing a non-empty admissible parameter space in each regime. These conditions are constructive rather than necessary for achievability.
	\end{remark}
	
	\begin{remark}
		Scenarios in the communication systems for which the CIR is characterized by a high-power first tap followed by much weaker subsequent taps (multipath), may be adopted to Lemma \ref{Lem.Increasing_K}, i.e., where $\bG$ features a convolution matrix $\fH$ which enjoys a strictly positive smallest singular value. For example, in satellite communication or indoor environments when a single direct signal path is significantly stronger than sum of other multipath reflections \cite{G05}. In such scenarios Rician fading models wireless channels with a direct line-of-sight path combined with scattered multipath components.
	\end{remark}
	\begin{remark}
		\label{Rem.Spec_Nulls}
		Observe that, in relation to Lemma~\ref{Lem.Min_dis_con_cb}, deriving a non-trivial lower bound on $\| \mathbf{c}_i^{\mathbf{h}} - \mathbf{c}_j^{\mathbf{h}}\|,$ reveals a fundamental trade-off between assumptions on the CIR and structural constraints on the codebook. By Parseval's theorem, the squared Euclidean distance admits a frequency-domain representation in which it depends on the product of the channel spectrum and the spectrum of the codeword difference. A direct approach to lower bounding this quantity requires that the channel frequency response be uniformly bounded away from zero, that is, its infimum is strictly positive as accomplished in Lemma~\ref{Lem.Min_dis_con_cb}. If this condition is relaxed, allowing the channel spectrum to exhibit nulls or vanishing behavior over subsets of the frequency domain, i.e.,
		\begin{align}
			\label{Eq.Null_Spec}
			\inf_{\omega} |H(\omega)| = 0
		\end{align}
		as is typical for frequency-selective channels in practical scenarios~\cite{G05}, then additional structure must be imposed on the codebook. In particular, constraints on the Toeplitz matrix formed by stacking the codewords as rows, are required to ensure that their spectral support avoids the null space of the channel. Such restrictions convolutedly reduce the admissible signal space for sphere-packing arguments, thereby shrinking the feasible packing region. Consequently, this may lead to a degradation in achievable rates or a reduction in the scaling of the attainable codebook size. A systematic characterization of this trade-off, and the design of constructions that optimally balance channel uncertainty and codebook structure, remain open problems and constitute a promising direction for future research.
		
		Observe that if \eqref{Eq.Null_Spec} holds, then the channel severely suppresses the received signal over certain frequency bands. In this regime, there exist frequencies at which the channel gain is arbitrarily small, causing corresponding spectral components of the transmitted signal to be strongly attenuated or convolutedly eliminated. As a result, the channel convolution operator becomes ill-conditioned and fails to be bounded below. That is, Lemma \ref{Lem.Increasing_K} does not work since
		\begin{equation}
			\sigma_{\rm min}(\fH)=0.
		\end{equation}
		Consequently, distinct transmitted signals can be mapped to nearly indistinguishable received signals after propagation through the channel. This leads to a collapse of separation in the signal space and ultimately results in a loss of identifiability at the receiver for distinct messages. \qed
	\end{remark}

	\textbf{\textcolor{blau_2b}{Encoding:}} For message $i \in [\![M]\!],$ sender transmits $\fx = \fc_i.$
	
	\textbf{\textcolor{blau_2b}{Decoding:}}
	Let $e_1, e_2, \eta_0, \zeta_0, \zeta_1 > 0$ be arbitrarily small constants. For concise analysis, we adopt the following conventions:
	\begin{itemize}[leftmargin=*]
		\item $Y_t(i) = c_{i,t}^{\fh} + Z_t,\, \forall t \in [\![\bar{n}]\!]$ is channel output given $\fx=\fc_i.$
		\item The output vector is $\fY(i) = (Y_t(i))_{t=1}^{\bar{n}}$ with $\bar{n} = n+K-1.$
		\item $c_{i,t}^{\fh} \triangleq \sum_{k=0}^{K-1} h_k c_{i,t-k}, \, \forall t \in [\![\bar{n}]\!]$ is the convoluted symbol.
		\item $\delta_n = 4a / 3n^{(1-(2\kappa + b))/2}$ is the \emph{decoding threshold} where $a,b>0$ are fixed and arbitrary constants, respectively.
		\item The spectrum is non-vanishing, i.e., $\inf_{\omega \in [-\pi,\pi]} \hspace{-.5mm}  |H(\omega)| \hspace{-.7mm} > \hspace{-.7mm} 0.$
	\end{itemize}
	To identify if message $j$ was sent, the decoder checks if $\mathbf{y}$ belongs to decoding set $\mathsf{D}_j = \{ \fy \in \mathbb{R}^{\bar{n}} \,:\; T(\fy,\fc_j^{\fh}) \leq \delta_n \}$ where $T(\fy,\fc_j^{\fh}) = \bar{n}^{-1} \| \fy -  \fc_j^{\fh} \|^2 - \sigma_Z^2$ is referred to as the \emph{decoding measure}. Next, we adopt the following definitions:
	\begin{itemize}[leftmargin=*]
		\item $T(\fY(i),\fc_j^{\fh}) \hspace{-.6mm} = \hspace{-.6mm} \beta - \alpha$ with $\beta \hspace{-.6mm} \triangleq \hspace{-.6mm} \bar{n}^{-1} \| \fY(i) -  \fc_j^{\fh} \|^2$ and $\alpha \hspace{-.6mm} \triangleq \hspace{-.6mm} \sigma_Z^2.$
		\item $\beta_1 \triangleq \bar{n}^{-1} \big( \| \fZ \|^2 + \| \fc_i^{\fh} - \fc_j^{\fh} \|^2 \big).$
		\item $\beta_2 \triangleq 2\bar{n}^{-1} \sum_{t=1}^{\bar{n}} \big(c_{i,t}^{\fh} - c_{j,t}^{\fh} \big)Z_t.$
		\item $\E_0 = \big\{ \fZ \in \mathbb{R}^{\bar{n}} \;:\, \big|  2\bar{n}^{-1} \sum_{t=1}^{\bar{n}} \big( c_{i,t}^{\fh} - c_{j,t}^{\fh} \big) Z_t \big| > \delta_n \big\}.$
		\item $\E_1 = \big\{ \fZ \in \mathbb{R}^{\bar{n}} \;:\, \bar{n}^{-1} \big( \| \fZ \|^2 + \| \fc_i^{\fh} - \fc_j^{\fh} \|^2 \big) - \sigma_Z^2 \leq 2\delta_n \big) \big\}.$
		\item $\E_2 = \big\{ \fZ \in \mathbb{R}^{\bar{n}} \;:\, \bar{n}^{-1} \| \fZ + \fc_i^{\fh} - \fc_j^{\fh} \|^2 - \sigma_Z^2 \leq \delta_n \big\}.$
	\end{itemize}

	\textbf{\textcolor{blau_2b}{Type I:}}
	The type I errors occur when the transmitter sends $\fc_i,$ yet $\fY \notin \mathsf{D}_i.$ $\forall i \in [\![M]\!],$ the type I error probability reads 
		\vspace{-1mm}
	\begin{align}
		\label{Eq.TypeIError}
		P_{e,1}(i) = \Pr\big( \fY(i) \in \mathsf{D}_i^c \big) = \Pr\big( T(\fY(i),\fc_i^{\fh}) > \delta_n \big).
	\end{align}
	To bound $P_{e,1}(i),$ we perform the Chebyshev's inequality, i.e.,
	\begin{equation}
		\Pr( T(\fY(i),\fc_i^{\fh}) > \delta_n ) \leq \text{Var} [ T(\fY(i),\fc_i^{\fh}) ]/\delta_n^2.
	\end{equation}
	Linearity of the expectation gives $\mathbb{E} [T(\fY(i),\fc_i^{\fh})] = 0.$ Next, invoking statistical independence of noise and $\text{Var}[Z_t^2] \leq \mathbb{E}[Z_t^4]\,, \forall t \in [\![\bar{n}]\!],$ gives $\text{Var}[ T(\fY(i),\fc_i^{\fh})] \leq 3\sigma_Z^4/\bar{n}.$ Thereby,
	\vspace{-1mm}
	\begin{align}
		\label{Ineq.TypeI_Final}
		\hspace{-2mm} P_{e,1}(i) \leq \frac{\text{Var} \big[ T(\fY(i),\fc_i^{\fh}) \big]}{\delta_n^2} \leq \frac{3\sigma_Z^4}{\bar{n}\delta_n^2} \stackrel{(a)}{\leq} \frac{1.6875\sigma_Z^4}{a^2\bar{n}^{2\kappa + b}} \triangleq \eta_0, 
	\end{align}
	where $(a)$ uses $\delta_n = 4a / 3n^{(1-(2\kappa + b))/2}.$ Hence, $P_{e,1}(i) \leq \eta_0 \allowbreak \leq \allowbreak e_1$ holds for sufficiently large $n$ and arbitrarily small $e_1 > 0.$
	
	\textbf{\textcolor{blau_2b}{Type II:}}
	We examine type II errors, i.e., when $\fY \in \mathsf{D}_j$ while the transmitter sent $\fc_i$ with $i \neq j \,.$ Then, for every $i,j \in [\![M]\!],$ the type II error probability is given by
	\begin{align}
		P_{e,2}(i,j) = \Pr \big( T(\fY(i);\fc_j^{\fh}) \leq \delta_n \big) = \Pr(\E_2) .
		\label{Eq.Pe2G}
	\end{align}
	Since $2\bar{n}^{-1} \sum_{t=1}^{\bar{n}} \big( c_{i,t}^{\fh} - c_{j,t}^{\fh} \big) Z_t
	\sim \N(0,4\sigma_Z^2 \|\fc_i^{\fh} - \fc_j^{\fh} \|^2 / \bar n^2  )$ therefore
	\begin{equation}
		\Pr(\E_0) = 2Q( \bar{n} \delta_n / 2\sigma_Z \|\fc_i^{\fh}-\fc_j^{\fh}\| ).
	\end{equation}
	Using the standard tail bound for the Gaussian $Q$-function, $2Q(x)\leq e^{-x^2/2}$, and $\|\fc_i^{\fh}-\fc_j^{\fh}\|^2 \le 4\bar nK^2L^2P_{\rm max}^2,$ we obtain an exponential upper bound as follows
	\begin{align}
		\label{Ineq.Event_E0_2}
		\Pr(\E_0) \allowbreak \leq \allowbreak \exp ( - a^2n^{2\kappa+b} / 18\sigma_Z^2 n^{2\kappa}L^2P_{\rm max}^2) \triangleq \zeta_0 .
\end{align}
Next, given the complementary event $\E_0^c,$ we obtain
	\begin{equation}
		2\sum_{t=1}^{\bar{n}} \big( c_{i,t}^{\fh} - c_{j,t}^{\fh} \big) Z_t > - \bar{n}\delta_n.
	\end{equation}
	Now, applying law of total probability to $\E_2$ for $(\E_0,\E_0^c)$ gives
	\vspace{-5mm}
	\begin{align}
		\hspace{-3mm} P_{e,2} (i,j) \stackrel{(a)}{\leq} \Pr(\E_0) + \Pr( \E_2 \cap\,{\E_0^c})
		\stackrel{(b)}{\leq} \Pr( \E_0) + \Pr( \E_1), \hspace{-2mm}
		\label{Eq.TypeIIError-E_0+E_1} 
	\end{align}
	where $(a)$ uses $\E_2 \cap \E_0 \subset \E_0$ and $(b)$ holds by the following
	\begin{equation*}
		\Pr(\{\beta_1 - \alpha \leq \delta_n-\beta_2\}
	\cap\{| \beta_2 | \leq \delta_n\}) \leq \Pr(\{\beta_1 - \alpha \leq 2\delta_n\}),
	\end{equation*}
	where the last inequality uses $\delta_n-\beta_2\leq 2\delta_n$ given $| \beta_2 | \leq \delta_n.$ By construction, codewords are separated by at least
	\begin{equation}
		2r_0=2\sqrt{\bar{n}\epsilon_n} ,
	\end{equation}
	yielding the following bound on $\Pr(\E_1)$. Therefore, exploiting Lemma \ref{Lem.Min_dis_con_cb}, we obtain $\| \fc_i^{\fh} - \fc_j^{\fh} \|^2 \geq 4H_{\rm min}^2\bar{n}\epsilon_n.$ Thus, we establish the following bound for $\E_1:$
		\vspace{-2mm}
	\begin{align}
		\label{Ineq.Event_E1}
		\Pr(\E_1) \leq \frac{\sum_{t=1}^{\bar{n}} \text{Var}[Z_t^2]}{\bar{n}^2\delta_n^2}
		\stackrel{(a)}{\leq} \frac{3\sigma_Z^4}{\bar{n}\delta_n^2} \stackrel{(b)}{\leq} \frac{27\sigma_Z^4}{16a^2n^{b+2\kappa}}	\triangleq \zeta_1,
	\end{align}
	where $(a)$ uses the Chebyshev's inequality and $(b)$ follows by similar arguments as provided in \eqref{Ineq.TypeI_Final}. Therefore, employing the upper bounds given in \eqref{Ineq.Event_E0_2}, \eqref{Eq.TypeIIError-E_0+E_1} and \eqref{Ineq.Event_E1},  yields
	\begin{align}
		P_{e,2}(i,j) \leq \Pr(\E_0) + \Pr(\E_1) \leq \zeta_0 + \zeta_1 \leq e_2,
	\end{align}
	hence, $P_{e,2}(i,j)\leq e_2$ for sufficiently large $n.$ Thus, $\forall e_1,e_2 > 0$ and sufficiently large $n$, an $(n,M(n,R),K(n,\kappa),e_1,e_2)$-code exists, completing the achievability proof of Theorem~\ref{Th.ISI-Gauss-Cap}.

	\subsection{Upper Bound (Converse Proof)}
	\label{Subsec.Converse}
	The converse proof is structured as follows. Lemma~\ref{Lem.Converse} bounds the minimum distance of any convoluted codebook, yielding an upper bound on the size of the identification codebook. For conciseness, we introduce following conventions:
	\begin{itemize}[leftmargin=*]
		\item \vspace*{-.3mm}$A_{\rm max} = KLP_{\rm max} = \mathcal{O}(n^{\kappa}).$
		\item $\mathsf{C}_{\text{\tiny conv}} \triangleq \big\{ \fc_i \in \mathbb{R}^n:\; |c_{i,t}| \leq P_{\rm max} , \forall \, i \in [\![M]\!],\, \forall t \in [\![n]\!] \big\} .$
		\item $\mathsf{C}_{\text{\tiny conv}}^{\fh} \hspace{-.3mm} \triangleq \hspace{-.3mm} \big\{ \fc_i^{\fh} \in \mathbb{R}^{\bar{n}} :\, \mathbf{c}_i^{\fh} = (c_{i,t}^{\fh} )_{t=1}^{\bar{n}}, \, \fc_i \in \mathsf{C}_{\text{\tiny conv}}, \forall i \in [\![M]\!] \big\}.$
		\item $\alpha_n \triangleq \sqrt{\bar{n}\epsilon_n'}$ where $\epsilon_n' = a^2/\bar{n}^{2(1+b)}$ with $b>0$ being an arbitrarily small constant.
		\item $\fv_n = \left(-A_{\max} + 2\alpha_n\right)\mathbf{1}_n,$ where $\mathbf{1}_n = (1,\ldots,1)^\top \in \mathbb{R}^n.$
	\end{itemize}
	\begin{customlemma}{6}
		\label{Lem.Converse}
		Let $R$ be an achievable identification rate. Consider a sequence of $(n,\allowbreak M(n\allowbreak,R),\allowbreak K(n,\allowbreak \kappa), \allowbreak \lambda_1^{(n)}, \allowbreak \lambda_2^{(n)})$-codes $(\mathsf{C}_{\text{\tiny conv}}^{(n)},\allowbreak \D^{(n)})$ with $K(n,\kappa)=n^{\kappa}$ with $\kappa \in [0,1),$ such that $\lambda_1^{(n)}$ and $\lambda_2^{(n)}$ tend to zero as $n \rightarrow \infty.$ Then, given a sufficiently large $n,$ the convoluted codebook $\mathsf{C}_{\text{\tiny conv}}^{\fh}$ meets the property: Every pair of codewords $\fc_{i_1}^{\fh}$ and $\fc_{i_2}^{\fh},$ inside $\mathsf{C}_{\text{\tiny conv}}^{\fh},$ with $i_1,i_2 \in [\![M]\!]$ and $i_1 \neq i_2,$ are separated by\vspace*{-.2mm}
		\begin{align}
			\label{Ineq.Conv_Distance}
			\big\| \fc_{i_1}^{\fh} - \fc_{i_2}^{\fh} \big\| \geq 2\sqrt{\bar{n}\epsilon_n'} \triangleq 2\alpha_n .
		\end{align}
	\end{customlemma}
		\begin{proof}
			The proof is provided in Appendix \ref{App.Converse_Proof}. The key step is to exploit the continuity of the Gaussian output distribution to show that, if \eqref{Ineq.Conv_Distance} is violated, then sum of the two error probabilities converges to one, contradicting identifiability.
		\end{proof}
	
	Next, we use Lemma~\ref{Lem.Converse} to prove the upper bound on the identification capacity. Observe that since the minimum distance of the convoluted codebook is $2\alpha_n,$ we can arrange non-overlapping spheres $\S_{\fc_i^{\fh}}(n,\alpha_n)$ whose centers belong to the convoluted codebook $\mathsf{C}_{\text{\tiny conv}}^{\fh}.$ We note that in general the spheres belonging to such packing arrangement are inscribed in the hypercube $\mathsf{Q}_{\fv_n}(\bar{n},A_{\rm max} + 2\alpha_n).$ On the other hand such sphere packing arrangement is in general not saturated. This phenomenon occurs since we did not arrange a sphere packing from the beginning inside the convoluted codebook but rather enumerate the original codewords by associating a hypersphere inside the induced space, i.e., $\mathsf{C}_{\text{\tiny conv}}^{\fh}.$ However, the number of codewords, i.e., the number of codewords derived out of such induced sphere packing, $M,$ is upper bounded by the number of codewords derived out of a saturated packing arrangement, $\bar{M},$ accomplished directly in $\mathsf{C}_{\text{\tiny conv}}^{\fh}.$ Therefore, it follows that $M,$ is bounded by
	\begin{align}
		\label{Ineq.Codebook_Size_UB}
		M & = \frac{\text{Vol}\left(\bigcup_{i=1}^{M} \S_{\fc_i}^{\fh}(\bar{n},\alpha_n) \right)}{\text{Vol}(\S_{\fc_1}^{\fh}(\bar{n},\alpha_n))} \stackrel{(a)}{\leq} \frac{\text{Vol}\left(\bigcup_{i=1}^{\bar{M}} \S_{\fc_i}^{\fh}(\bar{n},\alpha_n) \right)}{\text{Vol}(\S_{\fc_1}^{\fh}(\bar{n},\alpha_n))}
		\nonumber\\&
		\stackrel{(b)}{=} \frac{\Updelta_n(\mathscr{S}) \cdot \text{Vol}\big( \mathsf{Q}_{\fv_n}(\bar{n},A_{\rm max} + 2\alpha_n) \big)}{\text{Vol}(\S_{\fc_1}^{\fh}(\bar{n},\alpha_n))}
		\nonumber\\&
		\stackrel{(c)}{\leq} 2^{-0.599\bar{n}} \cdot \frac{(A_{\rm max}+2\alpha_n)^{\bar{n}}}{\text{Vol}(\S_{\fc_1}^{\fh}(\bar{n},\alpha_n))},
	\end{align}
	where $(a)$ holds since a saturated packing encompass the maximum possible number of sphere, $(b)$ conforms the density definition and $(c)$ exploits  \eqref{Ineq.Density} and the following:
	\begin{align}
		\mathsf{C}_{\text{\tiny conv}}^{\fh} & \subseteq \mathsf{Q}_{\fv_n}(\bar{n},A_{\rm max} + 2\alpha_n)
		\nonumber\\&
		= \big\{ \fc_i^{\fh} \in \mathbb{R}^{\bar{n}} \hspace{-.4mm}:\hspace{-.2mm} - (A_{\rm max} + \alpha_n) \leq c_{i,t}^{\fh} \leq A_{\rm max} + \alpha_n \big\},
	\end{align}
	which implies 
	\begin{equation}
		\text{Vol}( \mathsf{C}_{\text{\tiny conv}}^{\fh} ) \leq \text{Vol}( \mathsf{Q}_{\fv_n}(\bar{n},A_{\rm max} + 2\alpha_n)) = (A_{\rm max} + 2\alpha_n)^{\bar{n}}.
	\end{equation}
	Thereby, we obtain the following upper bound on the logarithm of the codebook size
	\begin{align}
		\label{Ineq.Log_M_UB}
			 \log M \leq \bar{n} \log (A_{\rm max} + 2\alpha_n) - \bar{n} \log \alpha_n + \frac{1}{2} \bar{n} \log \bar{n} + \mathcal{O}(\bar{n}),
	\end{align}
	Now, for $\alpha_n = \sqrt{\bar{n}\epsilon_n'} = a/\bar{n}^{\frac{1+2b}{2}}$ and $A_{\rm max} = KLP_{\rm max} = \mathcal{O}(n^{\kappa}),$ we obtain
	\begin{align}
		\label{Ineq.Log_M_UB2}
		\log M & \leq \bar{n} \log (A_{\rm max}) + \bar{n} \log (1 + 2\alpha_n/A_{\rm max})
		\nonumber\\&
		+ \Big( \frac{1+2b}{2} \Big) \bar{n} \log \bar{n} - \bar{n} \log a \sqrt{\pi} + \frac{1}{2} \bar{n} \log \bar{n} + \mathcal{O}(\bar{n})
		\nonumber\\&
		= \bar{n} \log ( KLP_{\rm max} ) + \bar{n} \log \Big(1 + \frac{2a}{LP_{\rm max}n^{\kappa}\bar{n}^{\frac{1+2b}{2}}} \Big)
		\nonumber\\&
		- \bar{n} \log a\sqrt{\pi} + \frac{1}{2} \bar{n} \log \bar{n} + \mathcal{O}(\bar{n})
		\nonumber\\&
		= \kappa \bar{n} \log n + \bar{n} \log LP_{\rm max} + o(1) + \Big( \frac{2+2b}{2} \Big) \bar{n} \log \bar{n}
		\nonumber\\&
		- \bar{n} \log a\sqrt{\pi} + \mathcal{O}(\bar{n}),
	\end{align}
	where the leading-order term scales as $\bar{n} \log \bar{n} \sim n \log n,$ we set $M=2^{(n \log n)R},$ and we obtain
	\begin{equation*}
		\resizebox{.492\textwidth}{!}{
			$\begin{aligned}
				\hspace{-1.2mm} R \leq \frac{1}{n \log n} \Big[ \Big( \frac{2 + 2\kappa + 2b}{2} \Big) \, n \log n + \bar{n} \log \Big( \frac{LP_{\rm max}}{a\sqrt{\pi}} \Big)  + \mathcal{O}(n) \Big]
			\end{aligned}$}
	\end{equation*}
	which tends to $1 + \kappa + b$ as $n \to \infty.$ Now, since $b>0$ is arbitrarily small, an achievable rate must satisfy $R \leq 1 + \kappa.$ This completes the converse proof of Theorem~\ref{Th.ISI-Gauss-Cap}.
	
	\section{Conclusions and Outlook}
	\label{Sec.Conclusion}
	This work studies the identification over ISI Gaussian channels, generalizing the ISI-free model in \cite{Salariseddigh_ITW} motivated by wireless and molecular communications. We show that reliable identification is achievable with a super-exponential codebook $M=2^{(n\log n)R}$ when the DTFT spectrum is bounded away from zero, even for sub-linear ISI depth $K$. We establish lower and upper bounds on the identification rate $R$ in terms of the ISI rate $\kappa$. The gap between the derived bounds reflects the limitations of the respective achievability and converse techniques: the sphere-packing construction and distance decoder on the achievability side, and the relaxation to pairwise separation in the converse. On the achievability argument, the proposed construction need not exhaust all codes that could achieve the identification capacity. The converse extends to average-power constraints by replacing the peak-power region with the corresponding energy-constrained region. Since the converse relies only on pairwise output separation, the same packing argument applies after bounding the admissible region's volume. The converse does not require $\inf |H(\omega)|>0$ and therefore remains valid for frequency-selective ISI channels with arbitrarily small spectral magnitudes or spectral nulls, since the required separation is established directly for the convoluted codewords by contradiction with reliability. Future work includes finite-blocklength analysis, complex-valued and time-varying ISI channels, colored noise, practical code constructions, and ISI channels with larger or non-vanishing ISI depth, i.e., $\kappa \in [1/2,1]$.

	\section{Acknowledgments}
 	The authors acknowledge Professor Dr. -Ing. Dr. rer. nat. Holger Boche (Technical University of Munich) for helpful discussions. Dr. -Math. Christian Deppe was supported by the Federal Ministry of Research, Technology and Space (BMFTR) under the research program Communication Systems – “Souverän. Digital. Vernetzt.” through the joint project 6G-life (Project ID: 16KIS2415) and the project QSTARS (Project ID: 16KIS2196).

	\appendices
	
	\section{Proof of Lemma \ref{Lem.Constant_K}; Minimum Spectral Value of CIR with Constant Depth}
	\label{App.LB_FS}
	
	In the following, we employ the Z-transform \cite{Oppenheim99} and standard techniques to show that the minimal value of spectrum for any CIR vector whose length $K \geq 1$ is a fixed constant, i.e., $\kappa=0$ or $\kappa \to 0$ as $n \to \infty$ so that $K(n,\kappa) = \mathcal{O}(1),$ remains strictly positive.
	\begin{proof}
	Consider the DTFT of $\fh,$ i.e.,
	\begin{align}
		\label{Eq.DTFT}
		H(\omega) = \sum_{k=0}^{K-1} h_k e^{-j\omega k}.
	\end{align}
	If we let $z=e^{j\omega},$ the DTFT in \eqref{Eq.DTFT} is a polynomial in $z^{-1},$ that is, we have $$H(z) = h_0 + h_1z^{-1} + h_2z^{-2} + \ldots +h_{K-1}z^{-(K-1)}.$$ Now, multiplying by $z^{K-1}$ we obtain a standard polynomial $H(z)$ of degree $K-1$ as follow: $z^{K-1} H(z) = h_0z^{K-1} + h_1 z^{K-2} + \cdots + h_{K-1}.$ Now, observe that according to the fundamental theorem of algebra \cite{Ahlfors79}, any polynomial of degree $K-1$ has exactly $K-1$ complex roots, $z_0,\ldots,z_{K-2}.$ Thereby, we have $$z^{K-2} H(z) = h_0 \prod_{k=0}^{K-2} (z - z_k).$$ Now, we take the magnitude from both side and evaluate on the unit circle, i.e., on $|z|=1,$
	\begin{align}
		|z^{K-2}| \cdot |H(e^{j\omega})| = |h_0| \prod_{k=1}^{K-2} |e^{j\omega} - z_k|,
	\end{align}
	since $|z^{K-2}| = |e^{j(K-2)\omega}|,$ we obtain the DTFT magnitude as follows
	\begin{align}
		\label{Eq.Mag_DTFT}
		|H(e^{j\omega})| = |h_0| \prod_{k=0}^{K-2} |e^{j\omega} - z_k|.
	\end{align}
	Observe that since the DTFT has no zeros on the unit circle, that is, $e^{j\omega} \neq z_k,\, \forall k \in [\![K-1]\!]$ where $\omega \in [0,2\pi].$ The magnitude at any frequency is the product of the distances from the point on the unit circle $e^{j\omega}$ to each of the fixed roots in the complex plane. Because coefficients $h_k$ and $K$ are fixed, the positions of these roots are locked. There is a strictly positive minimum distance between the unit circle and the nearest root. Therefore, the product of these distances can never be zero and cannot decay because the geometry is static. More specifically, let define $d_k(\omega) \triangleq |e^{j\omega} - z_k|$ and $d_k^{\rm min} \triangleq \min_{\omega} d_k(\omega)$ be the minimum distance from the unit circle to the $k$-th root. Since $z_k$ is not on the unit circle, $d_k>0,\, \forall k \in [\![K-1]\!].$ Thereby, the non-decaying minimum magnitude reads
	\begin{align}
		H(e^{j\omega}) = |h_0| \prod_{k=0}^{K-2} d_k(\omega) \geq |h_0| \prod_{k=0}^{K-2} d_k^{\rm min} \triangleq H_{\rm LB} > 0 ,
	\end{align}
	where $H_{\rm LB}$ is a uniform positive lower bound on $H(e^{j\omega}).$
	Thereby,
	\begin{equation}
		H_{\rm min} \triangleq \frac{1}{\sqrt{2\pi}} \cdot \inf_{\omega \in [-\pi,\pi]} |H(\omega)| \geq \frac{1}{\sqrt{2\pi}} \cdot H_{\rm LB}  > 0 .
	\end{equation}
	\end{proof}

	\section{Proof of Lemma~\ref{Lem.Increasing_K}; Minimum Spectral Value of CIR with Increasing Depth}
	\label{App.Increasing_K}

	\begin{proof}
		Let $\mathbf{d}_{ij}\triangleq \mathbf{c}_i-\mathbf{c}_j.$ Since $\mathbf{c}_i^{\mathbf h}=\mathbf{H}\mathbf{c}_i$ and $\mathbf{c}_j^{\mathbf h}=\mathbf{H}\mathbf{c}_j,$ we have $\mathbf{c}_i^{\mathbf h}-\mathbf{c}_j^{\mathbf h}
		=
		\mathbf{H}(\mathbf{c}_i-\mathbf{c}_j)
		=
		\mathbf{H}\mathbf{d}_{ij}.$ Therefore,
		\begin{equation}
			\|\mathbf{c}_i^{\mathbf h}-\mathbf{c}_j^{\mathbf h}\|^2
			=
			\|\mathbf{H}\mathbf{d}_{ij}\|^2
			=
			\mathbf{d}_{ij}^{T}\mathbf{H}^{T}\mathbf{H}\mathbf{d}_{ij}.
			\label{eq:norm_quadratic}
		\end{equation}
		For $\mathbf{d}_{ij}\neq\mathbf{0}$, the Rayleigh quotient gives
		\begin{equation}
			\frac{
				\mathbf{d}_{ij}^{T}\mathbf{H}^{T}\mathbf{H}\mathbf{d}_{ij}
			}{
				\mathbf{d}_{ij}^{T}\mathbf{d}_{ij}
			}
			\geq
			\lambda_{\min}(\mathbf{H}^{T}\mathbf{H}).
			\label{eq:rayleigh_bound}
		\end{equation}
		Since the eigenvalues of $\mathbf{H}^{T}\mathbf{H}$ are the squared
		singular values of $\mathbf{H}$, $\lambda_{\min}\!(\mathbf{H}^{T}\mathbf{H}) = \sigma_{\min}^{2}(\mathbf{H}).$
		Thereby, $\|\mathbf{c}_i^{\mathbf h}-\mathbf{c}_j^{\mathbf h}\|^2
		\geq
		\sigma_{\min}^{2}(\mathbf{H})
		\|\mathbf{c}_i-\mathbf{c}_j\|^2.$
		Taking square roots gives
		\begin{equation}
			\label{Eq.LB_Conv_Dist}
			\|\mathbf{c}_i^{\mathbf h}-\mathbf{c}_j^{\mathbf h}\| \geq \sigma_{\min}(\mathbf{H}) \left\|\mathbf{c}_i-\mathbf{c}_j\right\|.
		\end{equation}
		
		Next, to simplify the lower bound in \eqref{Eq.LB_Conv_Dist} we provide the following lemma which will be useful in establishing a lower bound on the distance between the transformed vectors. In particular, it allows us to relate the minimum singular value of the full convolution matrix $\mathbf{H}$ to that of a specific square submatrix of $\mathbf{H}.$
		\begin{customlemma}{4}[Monotonicity of Singular Values Under Row Augmentation]
			\label{Lem.Mon_Sing_Values}
			Let $\mathbf H\in\mathbb{R}^{m\times n}$, with $m>n$, be a full-column-rank convolution matrix, and let $\mathbf H_s\in\mathbb{R}^{n\times n}$ be any square submatrix obtained by selecting $n$ rows of $\mathbf H$ (e.g., the first $n$ rows). Then for every $i \in [\![n]\!]$
			\begin{equation}\label{eq:HsH}
				\sigma_i(\mathbf H_s)\leq \sigma_i(\mathbf H) .
			\end{equation}
			In particular, $\sigma_{\min}(\mathbf H_s) \leq \sigma_{\min}(\mathbf H).$ Consequently, any lower bound on $\sigma_{\min}(\mathbf H_s)$ is also a valid lower bound on $\sigma_{\min}(\mathbf H)$.
		\end{customlemma}
		\begin{proof}
			The proof is provided in Appendix \ref{App.Mon_Sing_Values}.
		\end{proof}
		Thereby, employing Lemma \ref{Lem.Mon_Sing_Values} we obtain $\sigma_{\min}(\mathbf{H}_s) \leq \sigma_{\min}(\mathbf{H}).$
		Consequently,
		\begin{equation}
			\|\mathbf{c}_i^{\mathbf h}-\mathbf{c}_j^{\mathbf h}\|
			\geq
			\sigma_{\min}(\mathbf{H}_s)
			\|\mathbf{c}_i-\mathbf{c}_j\|.
			\label{eq:distance_Hs}
		\end{equation}
		Thus, any lower bound established for
		$\sigma_{\min}(\mathbf{H}_s)$ automatically yields the same lower
		bound for the channel-induced codeword distances associated with the
		full tall convolution matrix $\mathbf{H}$. Having established the bound
		\[
		\|\mathbf{c}_i^{\mathbf h}-\mathbf{c}_j^{\mathbf h}\|
		\geq
		\sigma_{\min}(\mathbf{H}_s)
		\|\mathbf{c}_i-\mathbf{c}_j\|,
		\]
		we next derive a lower bound on $\sigma_{\min}(\mathbf{H}_s)$. Specifically, we verify that $\mathbf{H}_s$ satisfies the SDD conditions of the following lemma. This lemma lower-bounds the smallest singular value of an SDD matrix in terms of its row and column diagonal-dominance margins. Thus, showing that the corresponding margins of $\mathbf{H}_s$ are positive directly yields a positive lower bound on $\sigma_{\min}(\mathbf{H}_s)$ and, consequently, on the distance between the transformed vectors.

		\begin{customlemma}{5}[Minimum Singular Value Bound for SDD Matrices]
			\label{Lem.LB_Sing}
			Assume matrix $\fA = [a_{i,j}] \in \mathbb{R}^{n \times n}$ and $\fA^T$ are both SDD, i.e., $|a_{i,i}| - \sum_{\substack{j=1\\j \neq i}}^n |a_{i,j}| > 0,\, \forall i \in [\![n]\!],$ and $|a_{i,i}| - \sum_{\substack{j=1\\j \neq i}}^n |a_{j,i}| > 0,\, \forall i \in [\![n]\!].$ Moreover, let define the diagonal dominance margins as follows
			\begin{align}
				L_1 \hspace{-.3mm} \triangleq \hspace{-.3mm} \min_{i} \Big\{ |a_{i,i}| - \sum_{\substack{j=1\\j \neq i}}^n |a_{i,j}| \Big\}, L_2 \hspace{-.3mm} \triangleq \hspace{-.3mm} \min_{i} \Big\{ |a_{i,i}| - \sum_{\substack{j=1\\j \neq i}}^n |a_{j,i}| \Big\} .
			\end{align}
			Then, the smallest singular value of $\fA,$ $\sigma_n(\fA),$ satisfies
			\begin{align}
				\sigma_n(\fA) \geq \sqrt{L_1L_2} > 0.
			\end{align}
			\begin{proof}
				The proof is provided in Appendix \ref{App.LB_Sing}.
			\end{proof}
		\end{customlemma}
		
		Next, we observe that, for the finite-length convolution matrix $\mathbf H_{\bar n\times n}$, the associated $n\times n$ square lower-triangular Toeplitz convolution matrix $\mathbf H_s$, formed by its first $n$ rows, has all diagonal entries equal to $h_0$. Since every diagonal entry of $\mathbf H_s$ is equal to $h_0$, while the sum of the magnitudes of the off-diagonal entries in each row and each column is bounded above by $\sum_{k=1}^{K-1}|h_k|$, it follows that for every row \(i\), \begin{equation}
			\sum_{j\neq i}\left|(\mathbf H_s)_{ij}\right| \leq \sum_{k=1}^{K-1}|h_k| < |h_0| = \left|(\mathbf H_s)_{ii}\right|.
		\end{equation}
		Similarly, for every column \(j\), \begin{equation} \sum_{i\neq j}\left|(\mathbf H_s)_{ij}\right| \leq \sum_{k=1}^{K-1}|h_k| < |h_0| = \left|(\mathbf H_s)_{jj}\right|. \end{equation} Hence, the square lower-triangular Toeplitz convolution matrix $\mathbf H_s$ is SDD by both rows and columns which allows Lemma~\ref{Lem.LB_Sing} to be applied to $\mathbf H_s$. Moreover, the diagonal dominance margins $L_1$ and $L_2$ satisfies
		\begin{align}
			L_1 = L_2 \triangleq |h_0| - \sum_{k=1}^{K-1} |h_k| \triangleq m > 0,
		\end{align}
		for $m$ being independent of $n$, then leveraging Lemma~\ref{Lem.LB_Sing} with setting $\fA = \fH_s$ guarantees that
		\begin{equation}
			\label{Eq.Sig_Min_H_s}
			\sigma_{\min}(\mathbf H_s)\ge\sqrt{L_1L_2} = \sqrt{m^2} = m > 0.
		\end{equation}
		Therefore, whenever the leading channel tap dominates the sum of the magnitudes of the remaining ISI taps, the smallest singular value of the tall convolution matrix admits the explicit lower bound in \eqref{Eq.Sig_Min_H_s}. This bound is independent of the output length $\bar{n}=n+K-1$ and depends only on the channel coefficients.
	\end{proof}

	\section{Proof of Lemma~\ref{Lem.Mon_Sing_Values}; Monotonicity of Singular Values Under Row Augmentation}
	\label{App.Mon_Sing_Values}
	
	\begin{proof}
		Let $\mathbf H\in\mathbb R^{m\times n}$ denote the full tall convolution
		matrix, and let $\mathbf H_s\in\mathbb R^{n\times n}$ be obtained by selecting
		$n$ rows of $\mathbf H$. Let
		$\mathbf P\in\mathbb R^{n\times m}$ denote the corresponding row-selection
		matrix. Then
		\begin{equation}
			\mathbf H_s=\mathbf P\mathbf H.
		\end{equation}
		Since $\mathbf P$ selects distinct rows, its rows are orthonormal, and hence
		\begin{equation}
			\mathbf P\mathbf P^T=\mathbf I_n.
		\end{equation}
		Consequently, $\mathbf P^T\mathbf P$ is the orthogonal projector onto the
		subspace spanned by the selected coordinate directions. In particular,
		\begin{equation}\label{eq:projection_order}
			0\preceq\mathbf P^T\mathbf P\preceq\mathbf I_m.
		\end{equation}
		Thus, $\mathbf I_m-\mathbf P^T\mathbf P\succeq0.$ Since congruence preserves positive semidefiniteness, it follows that
		\begin{equation}
			\mathbf H^T
			\bigl(\mathbf I_m-\mathbf P^T\mathbf P\bigr)
			\mathbf H
			\succeq0.
		\end{equation}
		Equivalently,
		\begin{equation}\label{eq:gram_loewner}
			\mathbf H^T\mathbf H
			-
			\mathbf H^T\mathbf P^T\mathbf P\mathbf H
			\succeq0.
		\end{equation}
		Using $\mathbf H_s=\mathbf P\mathbf H$, we obtain
		\begin{equation}\label{eq:gram_inequality}
			\mathbf H_s^T\mathbf H_s
			=
			\mathbf H^T\mathbf P^T\mathbf P\mathbf H
			\preceq
			\mathbf H^T\mathbf H.
		\end{equation}
		Equivalently, for every $\mathbf x\in\mathbb R^n$,
		\begin{equation}
			\mathbf x^T\mathbf H_s^T\mathbf H_s\mathbf x
			=
			\|\mathbf H_s\mathbf x\|_2^2
			\leq
			\|\mathbf H\mathbf x\|_2^2
			=
			\mathbf x^T\mathbf H^T\mathbf H\mathbf x.
		\end{equation}
		Thus, the row selection cannot increase the quadratic form associated with
		the Gram matrix. 
		
		We next use the spectral monotonicity induced by the Loewner order. Let the
		eigenvalues of a symmetric matrix $\fA\in\mathbb R^{n\times n}$ be ordered
		nonincreasingly as
		\begin{equation}
			\lambda_1(\fA)
			\geq
			\lambda_2(\fA)
			\geq
			\cdots
			\geq
			\lambda_n(\fA).
		\end{equation}
		Weyl's monotonicity theorem \cite{HornJohnson2013} for Hermitian matrices states that
		\begin{equation}
			\fA\preceq\fB
			\quad\Longrightarrow\quad
			\lambda_i(\fA)
			\leq
			\lambda_i(\fB),
			\qquad i=1,\ldots,n.
		\end{equation}
		Applying this result to \eqref{eq:gram_inequality}, with
		\begin{equation}
			\fA=\mathbf H_s^T\mathbf H_s,
			\qquad
			\fB=\mathbf H^T\mathbf H,
		\end{equation}
		gives
		\begin{equation}\label{eq:weyl_monotonicity}
			\lambda_i(\mathbf H_s^T\mathbf H_s)
			\leq
			\lambda_i(\mathbf H^T\mathbf H),
			\qquad i=1,\ldots,n.
		\end{equation}
		Since $\mathbf H_s^T\mathbf H_s$ and $\mathbf H^T\mathbf H$ are Gram
		matrices, their eigenvalues are the squared singular values of
		$\mathbf H_s$ and $\mathbf H$, respectively. Hence,
		\begin{equation}
			\lambda_i(\mathbf H_s^T\mathbf H_s)
			=
			\sigma_i^2(\mathbf H_s),
			\qquad
			\lambda_i(\mathbf H^T\mathbf H)
			=
			\sigma_i^2(\mathbf H).
		\end{equation}
		Substituting these identities into \eqref{eq:weyl_monotonicity} yields
		\begin{equation}
			\sigma_i^2(\mathbf H_s)
			\leq
			\sigma_i^2(\mathbf H),
			\qquad i=1,\ldots,n.
		\end{equation}
		Since singular values are nonnegative, taking square roots gives
		\begin{equation}\label{eq:HsH}
			\sigma_i(\mathbf H_s)
			\leq
			\sigma_i(\mathbf H),
			\qquad i=1,\ldots,n.
		\end{equation}
		In particular, setting $i=n$ yields
		\begin{equation}\label{eq:sigmaminHs}
			\sigma_{\min}(\mathbf H_s)
			\leq
			\sigma_{\min}(\mathbf H).
		\end{equation}
		This completes the proof.
	\end{proof}

	\section{Proof of Lemma~\ref{Lem.LB_Sing}; Minimum Singular Value Bound for SDD Matrices}
	\label{App.LB_Sing}
	
	\begin{proof}
		Since $\mathbf A$ is strictly diagonally dominant (SDD) by rows, the Levy--Desplanques theorem \cite{HornJohnson2013} guarantees that $\mathbf A$ is non-singular. Moreover, by Varah's bound \cite{Varah75} for the inverse of a SDD matrix, $\|\mathbf A^{-1}\|_{\infty} \le L_1^{-1}$ where
		\begin{equation}
			L_1 \triangleq \min_{i \in [\![n]\!]} \Big\{ |a_{i,i}| - \sum_{\substack{j=1\\j \neq i}}^n |a_{i,j}| \Big\} .
		\end{equation}
		Similarly, since $\mathbf A^T$ is also SDD by rows, another application of Varah's bound yields $\|(\mathbf A^T)^{-1}\|_{\infty} \le L_2^{-1}$
		where
		\begin{equation}
			L_2 \triangleq \min_{i \in [\![n]\!]} \Big\{ |a_{i,i}| - \sum_{\substack{j=1\\j \neq i}}^n |a_{j,i}| \Big\} .
		\end{equation}
		Next, using the identity $(\mathbf A^T)^{-1} = (\mathbf A^{-1})^T,$
		together with the duality relation between the induced $1$- and $\infty$-norms, $\|\mathbf X\|_1=\|\mathbf X^T\|_{\infty},$ we obtain
		\begin{equation}
			\|\mathbf A^{-1}\|_1
			=
			\|(\mathbf A^{-1})^T\|_{\infty}
			=
			\|(\mathbf A^T)^{-1}\|_{\infty}
			\le L_2^{-1}
		\end{equation}
		Consequently,
		\begin{equation}
			\|\mathbf A^{-1}\|_{\infty}
			\le
			L_1^{-1}
			\qquad
			\|\mathbf A^{-1}\|_1
			\le
			L_2^{-1}
		\end{equation}
		Now apply the standard inequality relating the induced matrix norms,
		\begin{equation}
			\|\mathbf X\|_2 \le	\sqrt{\|\mathbf X\|_1\,\|\mathbf X\|_{\infty}},
		\end{equation}
		with $\mathbf X=\mathbf A^{-1}$. It follows that
		\begin{align}
			\|\mathbf A^{-1}\|_2 \le \sqrt{\|\mathbf A^{-1}\|_1 \|\mathbf A^{-1}\|_{\infty}}\le \sqrt{L_2^{-1} L_1^{-1}} = (L_1L_2)^{-1/2}
		\end{align}
		Finally, since $\mathbf A$ is nonsingular, its smallest singular value satisfies $\sigma_n(\mathbf A) =\|\mathbf A^{-1}\|_2^{-1}.$
		Therefore,
		\begin{equation}
			\sigma_n(\mathbf A) = \|\mathbf A^{-1}\|_2^{-1} \ge \sqrt{L_1L_2}.
		\end{equation}
		which completes the proof.
	\end{proof}

	\section{Proof of Lemma~\ref{Lem.Converse}; Pairwise Separation of Achievable Convoluted Codebooks}

	\label{App.Converse_Proof}
	
	We prove Lemma~\ref{Lem.Converse} by contradiction. Specifically, we assume that the condition in \eqref{Ineq.Conv_Distance} does not hold and demonstrate that this assumption leads to a contradiction. In particular, we show that the sum of the type I and type II error probabilities tends to one, i.e., $\lim_{n \to \infty} \big[ P_{e,1}(i_1) + P_{e,2}(i_2,i_1) \big] = 1.$
	
	Fix $e_1$ and $e_2$. Let $\tau,\theta,\zeta>0$ be arbitrarily small. Assume to the contrary that 
	there exist two messages $i_1$ and $i_2$, where $i_1\neq i_2$, such that
	\begin{align}
		\label{Eq.Alpha_nFast}
		\big\| \fc_{i_1}^{\fh} - \fc_{i_2}^{\fh} \big\| < \sqrt{\bar{n}\epsilon_n'} \triangleq \alpha_n = a/\bar{n}^{\frac{1+2b}{2}}.
	\end{align}
	Now let us define two subsets as follows
	\begin{align}
		\label{Eq.Event_BC}
		\mathsf{D}_{i_1,i_2} & = \Big\{ \fy \in \mathsf{D}_{i_1}:
		\| \fy - \fc_{i_2}^{\fh}\| \leq \sqrt{\bar{n}( \sigma_Z^2 + \zeta)} \Big\}
		\nonumber\\
		\mathsf{E}_{i_2} & = \Big\{ \fy \in \mathbb{R}^{\bar{n}}:
		\| \fy - \fc_{i_2}^{\fh} \| \leq \sqrt{\bar{n}(\sigma_Z^2+\zeta)} \Big\} .
	\end{align}
	Next, we can bound the type I error probability according to the events degined in \eqref{Eq.Event_BC} as follows
	\begin{align}
		& 1-P_{e,1}(i_1)
		\nonumber\\
		& = \int_{\mathsf{D}_{i_1}} \hspace{-2mm} f_{\fZ}(\fy - \fc_{i_1}^{\fh}) d\fy
		\nonumber\\
		& = \int_{\mathsf{D}_{i_1,i_2}} f_{\fZ}(\fy - \fc_{i_1}^{\fh}) d\fy + \int_{\mathsf{D}_{i_1} \setminus \mathsf{D}_{i_1,i_2}} \hspace{-2mm} f_{\fZ}(\fy - \fc_{i_1}^{\fh}) d\fy 
		\nonumber\\&
		\leq \int_{\mathsf{D}_{i_1,i_2}} f_{\fZ}(\fy - \fc_{i_1}^{\fh}) d\fy + \int_{\mathsf{E}_{i_2}^c} f_{\fZ}(\fy - \fc_{i_1}^{\fh}) d\fy ,
		\label{Eq.Pe1boundConv0Fast}
	\end{align}
	where the last inequality holds since $\mathsf{D}_{i_1} \setminus \mathsf{D}_{i_1,i_2} \subset \mathsf{E}_{i_2}^c.$
	Consider the second integral, for which the domain is $\mathsf{E}_{i_2}^c$. Then, by the triangle inequality
	\begin{align}
		\| \fy - \fc_{i_1}^{\fh} \| & \geq \| \fy - \fc_{i_2}^{\fh} \| - \| \fc_{i_1}^{\fh} - \fc_{i_2}^{\fh} \|
		\nonumber\\&
		> \sqrt{\bar{n}(\sigma_Z^2+\zeta)} - \| \fc_{i_1}^{\fh} - \fc_{i_2}^{\fh} \|
		\nonumber\\&
		\geq\sqrt{\bar{n}(\sigma_Z^2+\zeta)} - \alpha_n,
	\end{align}
	For sufficiently large $n$, this implies the following subset
	\begin{align}
		\mathsf{F}_{i_1,i_2}^c = \Big\{\fy \in \mathbb{R}^{\bar{n}} \; : \, \| \fy - \fc_{i_1}^{\fh}\| > \sqrt{\bar{n}(\sigma_Z^2 + \eta)} \Big\},
		\label{Eq.Regiong0}
	\end{align}
	for $\eta<\frac{\zeta}{2}.$ That is, the event
	\begin{equation}
		\big\{\fy \in \mathbb{R}^{\bar{n}} \; : \, \| \fy - \fc_{i_2}^{\fh} \| \geq \sqrt{\bar{n}(\sigma_Z^2+\zeta)} \big\}
	\end{equation}
	implies
	\begin{equation}
		\big\{\fy \in \mathbb{R}^{\bar{n}} \; : \, \| \fy - \fc_{i_1}^{\fh} \| \geq \sqrt{\bar{n}(\sigma_Z^2+\eta)} \big\}.
	\end{equation}
	Thereby, we conclude that $\mathsf{F}_{i_1,i_2}^c \supset \mathsf{E}_{i_2}^c.$ Hence, the second integral in \eqref{Eq.Pe1boundConv0Fast} is bounded by
	\begin{align}
		\label{Ineq.F_i_1_i_2_compl}
		\int_{\mathsf{F}_{i_1,i_2}^c} f_{\fZ}( \fy - \fc_{i_1}^{\fh}) d\fy & = \Pr \Big( \| \fY - \fc_{i_1}^{\fh} \| > \sqrt{\bar{n}(\sigma_Z^2 + \eta)} \Big)
		\nonumber\\&
		\stackrel{(a)}{=} \Pr \big( \norm{\fZ}^2 \hspace{-.5mm} - \hspace{-.5mm} \bar{n}\sigma_Z^2 > \bar{n}\eta \big)
		\hspace{-.5mm} \stackrel{(b)}{\leq} \hspace{-.5mm} \frac{3\sigma_Z^4}{n\eta^2} \hspace{-.5mm} \leq \hspace{-.5mm} \tau,
	\end{align}
	for sufficiently large $n$, where $(a)$ holds by the Chebyshev's inequality, followed by the substitution of $\fZ \equiv \fY - \fc_{i_1}^{\fh}$ and $(b)$ uses $n \leq \bar{n}.$ Thus, merging \eqref{Eq.Pe1boundConv0Fast} and \eqref{Ineq.F_i_1_i_2_compl} and gives
	\begin{align}
		\label{Eq.ComplTypeIFast}
		1 - \tau - P_{e,1}(i_1) \leq \int_{\mathsf{D}_{i_1,i_2}} f_{\fZ}(\fy - \fc_{i_1}^{\fh}) d\fy.
	\end{align}
	Now, we can focus on the inner integral with domain of $\mathsf{D}_{i_1,i_2}$, i.e., when
	\begin{align}
		\| \fy - \fc_{i_2}^{\fh} \| \leq \sqrt{\bar{n}(\sigma_Z^2+\zeta)}.
		\label{Eq.ui2DistFast}
	\end{align}
	Observe that
	\begin{align}
		\label{Ineq.Error_Diff}
		& \big| f_{\fZ}(\fy - \fc_{i_1}^{\fh}) - f_{\fZ}(\fy - \fc_{i_2}^{\fh}) \big|
		\nonumber\\& 
		= f_{\fZ}(\fy - \fc_{i_1}^{\fh}) \cdot \Big| 1 - \exp\Big( -\frac{\Delta_{i_1,i_2}(\fy)}{2\sigma_Z^2} \Big) \Big| 
	\end{align}
	where $\Delta_{i_1,i_2}(\fy) \triangleq -\big( \| \fy - \fc_{i_2}^{\fh} \|^2 - \| \fy - \fc_{i_1}^{\fh} \|^2 \big) / 2\sigma_Z^2 .$
	By the triangle inequality,
	\begin{align}
		\label{Ineq.triangleFast}
		\| \fy - \fc_{i_1}^{\fh} \| \leq \| \fy -\fc_{i_2}^{\fh} \| + \| \fc_{i_1}^{\fh} - \fc_{i_2}^{\fh} \|.
	\end{align}
	Taking the square of both sides, we have
	\begin{align}
		& \| \fy - \fc_{i_1}^{\fh} \|^2
		\nonumber\\&
		\leq \| \fy - \fc_{i_2}^{\fh} \|^2 + \| \fc_{i_1}^{\fh} - \fc_{i_2}^{\fh} \|^2 + 2 \| \fy - \fc_{i_2}^{\fh} \| \cdot \| \fc_{i_1}^{\fh} - \fc_{i_2}^{\fh} \|
		\nonumber\\&
		\stackrel{(a)}{\leq} \| \fy - \fc_{i_2}^{\fh} \|^2 + \alpha_n^2 + \frac{2\sqrt{a(\sigma_Z^2+\zeta)}}{\bar{n}^{b}},
	\end{align}
	where $(a)$ uses \eqref{Eq.Alpha_nFast} and \eqref{Eq.ui2DistFast}. Thus, for sufficiently large $n,$ we obtain
	\begin{align}
		\label{Ineq.Norm_S_Conv}
		\| \fy - \fc_{i_1}^{\fh} \|^2 - \|\fy - \fc_{i_2}^{\fh} \|^2 \leq \theta .
	\end{align}
	Hence, recalling \eqref{Ineq.Error_Diff} and \eqref{Ineq.Norm_S_Conv} yields
	\begin{align}
		\label{Ineq.GaussianContinuityFast}
		\big| f_{\fZ}(\fy - \fc_{i_1}^{\fh}) - f_{\fZ}(\fy - \fc_{i_2}^{\fh}) \big| & \leq f_{\fZ}(\fy - \fc_{i_1}^{\fh}) \cdot \big| 1 - e^{\frac{\theta}{2\sigma_Z^2}} \big|
		\nonumber\\&
		\leq \tau f_{\fZ}(\fy - \fc_{i_1}^{\fh}),
	\end{align}
	for sufficiently small $\theta>0$ such that $|1-e^{\frac{\theta}{2\sigma_Z^2}}| \leq \tau.$ Now, using \eqref{Eq.ComplTypeIFast} we have the following lower bound on the sum of the type I and type II error probabilities
	\begin{align}
		& P_{e,1}(i_1) + P_{e,2}(i_2,i_1)
		\nonumber\\&
		\geq 1 - \tau - \int_{\mathsf{D}_{i_1,i_2}} f_{\fZ}(\fy - \fc_{i_1}^{\fh})\,d\fy + \int_{\mathsf{D}_{i_1}} f_{\fZ}(\fy - \fc_{i_2}^{\fh}) \,d\fy
		\nonumber\\&
		\geq 1- \tau - \int_{\mathsf{D}_{i_1,i_2}} \big| (f_{\fZ}(\fy - \fc_{i_1}^{\fh}) - f_{\fZ}(\fy - \fc_{i_2}^{\fh})) \big| \,d\fy.
	\end{align}
	Hence, by (\ref{Ineq.GaussianContinuityFast}), we obtain
	\begin{align}
		P_{e,1}(i_1) + P_{e,2}(i_2,i_1) & \geq 1- \tau -\tau \int_{\mathsf{D}_{i_1,i_2}} f_{\fZ}(\fy - \fc_{i_1}^{\fh}) d\fy
		\nonumber\\&
		\geq 1-2\tau,
	\end{align}
	which leads to a contradiction for sufficiently small $\tau$ such that $2\tau < 1 - e_1 - e_2.$ Clearly, this is a contradiction since the error probabilities tend to zero as $n \rightarrow \infty.$ Thus, the assumption in \eqref{Eq.Alpha_nFast} is false. This completes the proof of Lemma~\ref{Lem.Converse}.

	\section*{}
	\bibliographystyle{IEEEtran}
	\bibliography{Lit}
	
\end{document}